\newcommand{\opt}{\mbox{opt}}
\newcommand{\MCPDT}{\mbox{MCPDT}}
\newcommand{\MCIP}{\mbox{MCIP}}
\newtheorem{lemma}{Lemma}[section]
\newtheorem{proposition}{Proposition}[section]
\newtheorem{theorem}{Theorem}[section]
\newtheorem{cor}{Corollary}[section]
\newtheorem{definition}{Definition}[section]
\newtheorem{fact}{Fact}[section]
\newtheorem*{remark}{Remark}
\title{Comparing Pedigree Graphs}
\date{}
\author{
  Bonnie Kirkpatrick\footnote{B.K. is with the Electrical Engineering and Computer Sciences, University of California, Berkeley, and 
International Computer Science Institute, Berkeley, {\tt bbkirk@eecs.berkeley.edu}.},
  Yakir Reshef\footnote{Y.R. is with the Weizmann Institute of Science, Rehovot, Israel, \tt{yakir.reshef@weizmann.ac.il}.},
  Hilary Finucane\footnote{H.F. is with the Weizmann Institute of Science, Rehovot, Israel, \tt{hilary.finucane@weizmann.ac.il}.}, \\
  Haitao Jiang\footnote{H.J. is with the School of Computer Science and Technology, Shandong University, China., \tt{htjiang@cs.montana.edu}.},
  Binhai Zhu\footnote{B.Z. is with the Department of Computer Science, Montana State University, Bozeman, MT 59717, USA, \tt{bhz@cs.montana.edu}.},
  Richard M. Karp\footnote{R.M.K. is with the Electrical Engineering and Computer Sciences, University of California, Berkeley, and International Computer Science Institute, Berkeley, \tt{karp@cs.berkeley.edu}.}}
\begin{document}         

\maketitle
\vspace{-0.75cm}


\begin{abstract}
Pedigree graphs, or family trees, are typically constructed by an expensive process of examining genealogical records to determine which pairs of individuals are parent and child. New methods to automate this process take as input genetic data from a set of extant individuals and reconstruct ancestral individuals.  There is a great need to evaluate the quality of these methods by comparing the estimated pedigree to the true pedigree.

In this paper, we consider two main pedigree comparison problems. The first is the pedigree isomorphism problem, for which we present a linear-time algorithm for leaf-labeled pedigrees. The second is the pedigree edit distance problem, for which we present 1) several algorithms that are fast and exact in various special cases, and 2) a general, randomized heuristic algorithm.

In the negative direction, we first prove that the pedigree isomorphism problem is as hard as the general graph isomorphism problem, and that the sub-pedigree isomorphism problem is NP-hard. We then show that the pedigree edit distance problem is APX-hard in general and NP-hard on leaf-labeled pedigrees.

We use simulated pedigrees to compare our edit-distance algorithms to each other as well as to a branch-and-bound algorithm that always finds an optimal solution.
\end{abstract}


\newpage
\section{Introduction}

Pedigrees, or family trees, are of interest in a variety of fields. They are interesting to geneticists due to the accuracy with which recombinations can be inferred~\cite{Coop2008} and with which disease loci can be mapped~\cite{Ng2009,Ng2010}.  Likelihood calculations, i.e.~calculations of the probability of observing the data inherited in a given pedigree, are of great interest for mapping disease loci. Pedigrees are objects of interest in computer science due to their close connection with machine learning methods~\cite{Lauritzen2003,Fishelson2005}.  Many calculations on pedigree graphs are hard~\cite{Piccolboni2003,Li2003,Kirkpatrick2010b}, but notable attempts have been made to improve the speed of these calculations~\cite{Browning2002,Geiger2009,Li2010}.

Reconstructing pedigrees is thus an interesting but difficult problem.  Genealogical methods for reconstructing pedigrees can involve multiple sources with contradictory or missing information~\cite{Anderson2006,Simmons2004,Sun2002, McPeek2000, Boehnke1997}. Due to the error-prone nature of genealogical pedigree reconstruction and the unavailability of genealogical data for some animals, the pedigree reconstruction problem was introduced by Thompson~\cite{thompson1985} as follows: given genetic data for a set of extant individuals, reconstruct relationships between those individuals that may involve unobserved ancestors.  State-of-the-art practical methods include~\cite{thompson1985,Stankovich2005,Berger-Wolf2007,Brown2010,Kirkpatrick2011c} and theoretical work includes~\cite{Thatte2008,Thatte2006}.

Evaluating reconstruction methods requires inferring a pedigree on an instance for which the true pedigree is known and comparing the inferred and known pedigrees. Both the estimated pedigree and the true pedigree will have the same set of extant individuals---i.e. the individuals having genetic data---but may have different inferred ancestors.  Thus, to compare these two pedigrees, we must compare their topology in a fashion that respects the labels of the extant individuals.

Existing methods of comparing pedigrees are insufficient.  For example, phylogenetic tree comparison methods can only be used to compare tree-like pedigrees, but pedigrees can take more general forms.  As another example, ~\cite{Kirkpatrick2011c} evaluates accuracy using the kinship coefficient of all pairs of individuals, where the kinship coefficient of two individuals at a single locus is the average number of alleles inherited from the same ancestor.  This is a poor accuracy measure, since the kinship coefficient is not identifiable.  For example, two half-siblings have the same kinship coefficient as an uncle and nephew.  A recent result demonstrates non-identifiability of larger pedigrees~\cite{Pinto2010}.  Furthermore, the pedigree likelihood is not an acceptable pedigree comparison method both because it requires an exponential-time algorithm and because the non-identifiability of the kinship coefficient may well imply the non-identifiability of the likelihood.  While pedigree isomorphisms were discussed very briefly by Steel and Hein~\cite{Steel2006} in the context of pedigree reconstruction, they did not discuss the pedigree isomorphism problem and its algorithms.  Finally, brute force methods of pedigree comparison are not sufficient, as can be seen in our own brute-force comparison where the simulation was limited to pedigrees of fourteen individuals.  This is in contrast to pedigree data sets which include thousands of individuals~\cite{Abney2002,Sutter2007}.  Other biologists are collecting data from hundreds of individuals from large families---i.e. \~60 individuals per family for salmon~\cite{Herbinger1999,Almudevar1999}---where brute-force and likelihood methods fall short due to exponential running times.

Two natural formulations of the pedigree comparison problem are discussed in this paper: pedigree isomorphism and pedigree edit distance.  Two pedigrees are isomorphic if there exists a graph isomorphism which respects the genders of all individuals and the identities of the individuals for which we have genetic data; the pedigree isomorphism problem is to determine whether two pedigrees are isomorphic. The more difficult pedigree edit distance problem is to determine how many edge insertions and deletions are needed for two pedigrees to become isomorphic.

In this paper, we formalize the isomorphism and edit distance problems, provide useful algorithms for certain instances of these problems, and give four hardness results.  The algorithms we present include a fast algorithm for leaf-labeled pedigree isomorphism and a polynomial-time dynamic programming algorithm for the edit distance of sufficiently similar pedigrees.  For general pedigrees, we present a heuristic algorithm for the edit distance.

Our first hardness result is that pedigree isomorphism is as hard as general graph isomorphism, making it GI-hard. The reduction we use also shows that sub-pedigree isomorphism is as hard as sub-graph isomorphism, making it NP-hard. The third and fourth hardness results show APX-hardness for the edit distance problem in general and NP-hardness on pedigrees whose leaves are all labeled. 

\section{Preliminaries}
\label{sec:preliminaries}

A \em pedigree \em $\mathcal{P} = (P,s,X,\ell)$ consists of a {\bf pedigree graph} $P = (I(P),E(P))$ with vertices $I(P)$ and edges $E(P)$, a {\bf gender function} $s : I(P) \rightarrow \{m,f \}$, a set $X \subseteq I(P)$ of {\bf labeled individuals}, and an injective {\bf labeling} $\ell : X \rightarrow \mathbb{N}$, such that:
\vspace{-0.25cm}
\begin{enumerate}
\item $P$ is directed and acyclic.
\vspace{-0.25cm}
\item For all $v \in V$, the in-degree of $v$ is either two or zero.
\vspace{-0.25cm}
\item If $(a,v), (b,v) \in E$, then $s(a) \neq s(b)$.
\end{enumerate}
\vspace{-0.25cm}
$P$ is called the {\bf pedigree graph} of $\mathcal{P}$. Vertices of $P$ are called the {\bf individuals} of $\mathcal{P}$. Individuals with in-degree zero are {\bf founders} while individuals with in-degree two are {\bf non-founders}. Individuals with out-degree zero are called {\bf leaf individuals}.  For an individual $x \in X$, $\ell(x)$ is the {\bf label} assigned to $x$. We will sometimes write $\mathcal{P} = (P,s)$ to mean $\mathcal{P} = (P,\emptyset,s,\ell)$ with trivial $\ell$.  When $X=I(P)$, we will say that ${\cal P}$ is {\bf fully labeled}.  If the reference pedigree is clear, we may write $I$ to refer to $I(P)$.   Figure~\ref{fig:example} depicts an example of a fully labeled pedigree. 

\begin{figure}
\begin{center}
\includegraphics[scale=1]{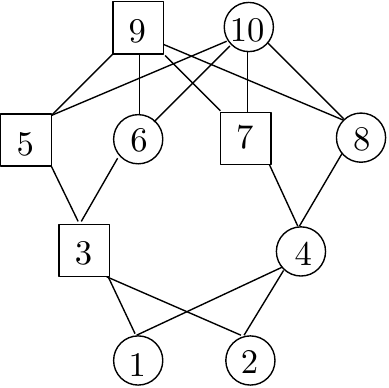}
\caption{{\bf An Example Pedigree.}  
This pedigree has edges implicitly directed downward from parent to child and shows two founding grand-parents, their four children, two inbred grand-children and two inbred great-grand-children. Each edge represents the transmission from parent $i$ to offspring $j$ of a single (possibly recombinant) copy of each chromosome. }
\label{fig:example}
\end{center}
\end{figure}

The labeled individuals $X \subseteq I$ typically have an available DNA sample on which genotyping or sequencing can be performed.  While it may be  algorithmically convenient to assume that we have samples for all individuals, $X = I$, this assumption is quite impractical.  Very often, there are individuals represented in the pedigree who are deceased and for whom samples are unavailable.  There are circumstances in which there will be no labeled individuals, i.e., $X = \emptyset$. Then we would want to rely on the genealogical structure alone in determining whether the same individuals appear in the two pedigree graphs.

A pedigree $\mathcal{P} = (P,s,X,\ell)$ is {\bf leaf-labeled} if $X$ contains exactly the leaf-individuals of $\mathcal{P}$.
Very often this is the case since, for likelihood calculations, one is only interested in the individuals for whom identifying data exist. 

Two pedigrees $\mathcal{P} = (P,s,X,\ell)$ and $\mathcal{P'} = (P',s',X',\ell')$ are {\bf compatibly leaf-labeled} if both are leaf-labeled and for every leaf individual $v \in I(P)$, there is a leaf individual $v' \in I(P')$ such that $\ell(v) = \ell'(v')$ and vice versa.  Given two leaf-labeled pedigree graphs $P$ and $Q$, we can always obtain new pedigree graphs $P'$ and $Q'$ that are compatibly leaf-labeled.  This is done by performing a depth-first search starting at each compatibly leaf-labeled individual $v$, following parent-child edges backward in time, and then pruning those individuals which were not visited by the depth-first search.  This is interesting, because this allows us to detect an isomorphic subgraph which is compatibly leaf-labeled.

A pedigree gives rise to sub-pedigrees in the following way. 
If two pedigrees $\mathcal{P} = (P,s,X,\ell)$ and $\mathcal{P'} = (P',s',X',\ell')$ satisfy:
\begin{enumerate}
\vspace{-0.25cm}
\item $I(P') \subseteq I(P)$ and $P'$ is the subgraph of $P$ induced by $I(P')$,
\vspace{-0.25cm}
\item  $X' = X \cap I(P')$,
\vspace{-0.25cm}
\item $\ell' = \ell|_{I(P')}$, and 
\vspace{-0.25cm}
\item $s' = s|_{I(P')}$,
\vspace{-0.25cm}
\end{enumerate}
then we say that $\mathcal{P'}$ is a {\bf sub-pedigree} of $\mathcal{P}$, and write $\mathcal{P'} \subseteq \mathcal{P}$.

Given a set $A \subseteq I(P)$, we will write $\mathcal{P}|_A$ to denote the minimal sub-pedigree of $\mathcal{P}$ containing the vertices in $A$. Likewise, if $A \subseteq E(P)$, we write $\mathcal{P}|_A$ to denote the minimal sub-pedigree of $\mathcal{P}$ containing the edges in $A$.

There are two additional types of pedigrees in which we are interested: monogamous and generational pedigrees.
A pedigree $\mathcal{P} = (P,s,X,\ell)$  is {\bf monogamous} if all the individuals are monogamous.  An individual $v \in I(P)$ is {\bf monogamous} if the number of individuals $v' \neq v$ such that $(v,u),(v',u) \in E(P)$ for some $u \in I(P)$ is at most 1.

A pedigree $\mathcal{P} = (P,s,X,\ell)$ is {\bf generational} if there exists a function $G : I(P) \rightarrow \mathbb{N}$ such that the following conditions hold: 
\begin{enumerate}
\vspace{-0.25cm}
\item $G(v) = 1$ for some $v \in I(P)$ where $v$ has in-degree zero, and 
\vspace{-0.25cm}
\item If $(u,v) \in E(P)$, then $G(v) = G(u) + 1$.
\vspace{-0.25cm}
\end{enumerate}
The number $G(v)$ is called the {\bf generation} of $v$, and $G$ is called the {\bf generation map} of $\mathcal{P}$. 
Whenever $\mathcal{P}$ is a generational pedigree with pedigree graph $P$, we will use $I_g(P)$ to denote the individuals of $\mathcal{P}$ whose generation is $g$. 
We will say that a pedigree is {\bf connected} if its pedigree graph is weakly connected. It is easy to see that a connected, generational pedigree has a unique generation map. The maximal value of this map is the number of generations of the pedigree.


\subsection{Pedigree Isomorphism}
\label{subsec:isomorphismdefs}

We now define the notion of an isomorphism between pedigrees. To do so, we first present the more general idea of a matching between pedigrees.

\begin{definition}
Given two pedigrees $\mathcal{P} = (P,s,X,\ell)$ and $\mathcal{P'} = (P',s',X',\ell')$, and a set $Y\subseteq I(P)$, an injective map $M : Y \rightarrow I(P')$ is a {\bf pedigree matching} between $\mathcal{P}$ and $\mathcal{P'}$ if it satisfies the following conditions.
\begin{enumerate}
\vspace{-0.25cm}
\item For every $v \in Y$, $s(v) = s'(M(v))$.
\vspace{-0.25cm}
\item For all $n \in \ell(X) \cap \ell'(X')$, $Y$ contains $\ell^{-1}(n)$ and $M(\ell^{-1}(n)) = (\ell')^{-1}(n)$.
\vspace{-0.25cm}
\end{enumerate}
\end{definition}
Informally, the second condition states that $M$ should respect the labellings $\ell$ and $\ell'$ in the sense that if $\ell$ and $\ell'$ give the same label to two vertices $u$ and $v$ respectively, then $M$ should match $u$ up to $v$.

We can now characterize a pedigree isomorphism as a matching that is also a graph isomorphism between the two pedigree graphs.

\begin{definition}
Given two pedigrees $\mathcal{P} = (P,s,X,\ell)$ and $\mathcal{P'} = (P',s',X',\ell')$, a bijection $\phi : I(P) \rightarrow I(P')$ is a {\bf pedigree isomorphism} between $\mathcal{P}$ and $\mathcal{P'}$ if:
\begin{enumerate}
\vspace{-0.25cm}
\item $\phi$ is a pedigree matching between $\mathcal{P}$ and $\mathcal{P'}$, and 
\vspace{-0.25cm}
\item $(u,v) \in E(P)$ if and only if $(\phi(u), \phi(v)) \in E(P')$
\vspace{-0.25cm}
\end{enumerate}
\end{definition}

\vspace{-0.25cm}
\paragraph{The Pedigree Isomorphism Problem.} Given two pedigrees $\mathcal{P}$ and $\mathcal{P'}$, does there exist a pedigree isomorphism between them?
\vspace{-0.50cm}
\paragraph{The Compatibly Leaf-Labeled Pedigree Isomorphism Problem.} Given two compatibly leaf-labeled pedigrees $\mathcal{P}$ and $\mathcal{P'}$, does there exist a pedigree isomorphism between them?
\vspace{-0.50cm}
\paragraph{The Sub-Pedigree Isomorphism Problem.} Given two pedigrees $\mathcal{P}$ and $\mathcal{P'}$, does there exist a pedigree isomorphism between $\mathcal{P}$ and some sub-pedigree of $\mathcal{P'}$?

In this paper, we will show that the compatibly leaf-labeled pedigree isomorphism problem can be solved in linear time. We will also show that the pedigree isomorphism problem is as hard as the general graph isomorphism problem, and that the sub-pedigree isomorphism problem is NP-hard.


\subsection{Edit Distance}
\label{subsec:editdistancedefs}
We are interested not only in determining whether two pedigrees are isomorphic, but also in how close they are to being isomorphic. Such a measure of distance between pedigrees would be useful for evaluating pedigree reconstruction methods: we could take a known pedigree, extract a subset of its individuals, reconstruct a pedigree from those individuals using our method of choice, and check the distance between the reconstructed pedigree and the true pedigree.

Informally, given two arbitrary pedigrees ${\cal P} = (P, s, X, \ell)$ and ${\cal P'} = (P', s', X', \ell')$, we want to find the minimum number of edge additions/deletions required to convert ${\cal P}$ into ${\cal P'}$. We call this the {\em edit distance} between $\mathcal{P}$ and $\mathcal{P'}$.  Notice that it is not necessary that $|I(P)| = |I(P')|$, because addition/deletion of edge-less vertices will be free.

Formally, we can define edit distance in terms of matchings between pedigrees. To do this, we need to measure how close a matching is to being a pedigree isomorphism. This is done by looking at the set of edges that are \em well matched \em by the matching.
\begin{definition}
Given two pedigrees $\mathcal{P} = (P,s,X,\ell)$ and $\mathcal{P'} = (P',s',X',\ell')$ and a matching $M : Y \rightarrow I(P')$ between them, the set $W_M$ of edges {\bf well-matched} by $M$ is defined by
\[ W_M = \{ (u,v) \in E(P) : u,v, \in Y, (M(u),M(v)) \in E(P') \} .\]
\end{definition}

Notice that the subgraph of $P$ induced by the edges in $W_M$ is isomorphic to the subgraph of $P'$ induced by the edges in $M(W_M)$. Therefore, $M$ implicitly defines an edit path from $P$ to $P'$: delete all the edges in $E(P) - W_M$, then add all the edges in $E(P') - M(W_M)$.  Moreover, the shortest edit path from $P$ to $P'$ must consist only of removing edges from $P$ and adding edges from $P'$, and so can be obtained in this way from a matching (up to the order in which edges are added and removed, which does not affect the length of the edit path). With this in mind, we define the match distance incurred by $M$ to be length of the corresponding edit path:

\begin{definition}
\label{def:matchdistance}
Given two pedigrees $\mathcal{P} = (P,s,X,\ell)$ and $\mathcal{P'} = (P',s',X',\ell')$ and a matching $M : Y \rightarrow I(P')$ between them, the {\bf match distance} of $M$ is
\[ d(M) = d_P(M) + d_{P'}(M) \]
where 
$d_{P}(M) = |E(P) - W_M|$ and $d_{P'}(M) = |E(P') - M(W_M)| $
\end{definition}

Now the edit distance is the length of the shortest edit path; in other words, the smallest possible match distance between the two pedigrees.

\begin{definition}
\label{def:editdistance}
Given two pedigrees $\mathcal{P} = (P,s,X,\ell)$ and $\mathcal{P'} = (P',s',X',\ell')$, the {\bf edit distance} between $\mathcal{P}$ and $\mathcal{P'}$ is
\[ D_{\mathcal{P}, \mathcal{P'}} = \min_{M}{d(M)}. \]
\end{definition}

The edit distance between two pedigrees can be calculated by finding a matching $M$ with a maximum number of well-matched edges. The following proposition, which we state without proof, formalizes this.

 \begin{proposition}
 \label{prop:dualToEditDistance}
 Given two pedigrees $\mathcal{P}$ and $\mathcal{P'}$, a matching $M$ between $\mathcal{P}$ and $\mathcal{P'}$ for which $|W_M|$ is maximized satisfies $D_{\mathcal{P}, \mathcal{P}'} = d(M)$.
 \end{proposition}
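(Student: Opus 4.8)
The plan is to show that the objective $d(M)$ being minimized in Definition~\ref{def:editdistance} depends on the matching $M$ only through the single quantity $|W_M|$, and in a strictly decreasing way; once this is established, minimizing $d(M)$ is literally the same as maximizing $|W_M|$, and the proposition follows immediately. So the whole argument reduces to rewriting $d(M)$ in a normalized form.

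The first step I would carry out is to verify that $|M(W_M)| = |W_M|$. By definition every edge $(u,v) \in W_M$ has both endpoints $u,v \in Y$, and $M$ is injective on $Y$, so the induced edge map $(u,v) \mapsto (M(u),M(v))$ is injective on $W_M$; hence $M(W_M)$ has the same cardinality as $W_M$. Next, since $W_M \subseteq E(P)$ and, by the definition of well-matched edges, $M(W_M) \subseteq E(P')$, I would expand the match distance from Definition~\ref{def:matchdistance} as
\[ d(M) = |E(P) - W_M| + |E(P') - M(W_M)| = \bigl(|E(P)| - |W_M|\bigr) + \bigl(|E(P')| - |M(W_M)|\bigr) = |E(P)| + |E(P')| - 2|W_M|. \]
Because $|E(P)| + |E(P')|$ is a constant not depending on $M$, this exhibits $d(M)$ as a strictly decreasing affine function of $|W_M|$.

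Finally I would take the minimum over all matchings $M$: we get $D_{\mathcal{P},\mathcal{P'}} = \min_M d(M) = |E(P)| + |E(P')| - 2\max_M |W_M|$, so any matching $M$ attaining $\max_M |W_M|$ also attains $\min_M d(M)$, i.e.\ $D_{\mathcal{P},\mathcal{P'}} = d(M)$, which is exactly the claim. I do not expect a genuine obstacle here; the only point that needs a moment's care is the injectivity observation $|M(W_M)| = |W_M|$ (which is what makes the two cardinality deficits both equal to $|E(\cdot)| - |W_M|$), together with the already-discussed fact from the surrounding text that every edit path from $P$ to $P'$ arises from some matching, so that $\min_M d(M)$ really is the edit distance rather than merely an upper bound on it.
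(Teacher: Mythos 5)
Your proof is correct, and it is essentially the argument the paper intends: the paper states Proposition~\ref{prop:dualToEditDistance} without proof, relying on exactly the observation you make explicit, namely that injectivity of $M$ gives $|M(W_M)| = |W_M|$, so $d(M) = |E(P)| + |E(P')| - 2|W_M|$ is a strictly decreasing function of $|W_M|$ and minimizing $d(M)$ over matchings (which is the definition of $D_{\mathcal{P},\mathcal{P'}}$) coincides with maximizing $|W_M|$. Nothing further is needed, since the edit distance is defined as $\min_M d(M)$, so the question of whether every edit path arises from a matching does not even enter the formal proof.
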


\vspace{-0.50cm}
\paragraph{The Pedigree Edit Distance Problem.}  For pedigrees ${\cal P}$ and ${\cal P}'$, find a matching $M$ between $\mathcal{P}$ and $\mathcal{P}'$ such that $d(M) = D_{\mathcal{P},\mathcal{P}'}$.

Notice that a pedigree isomorphism, if it exists, has a match distance of 0.  
In this paper, we will give efficient algorithms for a few different restrictions of the pedigree edit distance problem.
The four main problems and their hardness, as established in this paper, are shown in the table.

\begin{center}
  \begin{tabular}{| l | l | l | }
    \hline
                                   & {\bf Isomorphism} & {\bf Edit Distance} \\ \hline
    {\bf Compatibly Leaf-Labeled } & Linear alg. & NP-hard \\ \hline
    {\bf Not Labeled           }   & GI-hard     & APX-hard \\
    \hline
  \end{tabular}
\end{center}


\section{A Linear-Time Algorithm for the Compatibly Leaf-Labeled Pedigree Isomorphism Problem}
\label{sec:isomorphismalg}

In this section, we introduce a linear-time algorithm for the compatibly leaf-labeled pedigree isomorphism problem. In particular, we will establish that if all leaves are labeled then there exists a total order on the individuals that is easy to calculate.  The total orders, calculated on both pedigrees, are such that if an isomorphism exists between two compatibly leaf-labeled pedigrees, it can be easily found from these total orders.

\begin{proposition} There exists a linear-time algorithm for the compatibly leaf-labeled pedigree isomorphism problem. \end{proposition}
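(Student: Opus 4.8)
The plan is to compute, on each of the two input pedigrees, a \emph{canonical name} for every individual --- an integer code with the property that two individuals receive the same code precisely when the sub-pedigrees spanned by their respective descendants are isomorphic via a gender- and label-preserving map --- and then to exhibit the only possible isomorphism by pairing up individuals carrying equal codes. This works because in a leaf-labeled pedigree every individual is an ancestor of (or equal to) some labeled leaf: following out-edges from any vertex of the finite acyclic pedigree graph must terminate at a sink, and sinks are exactly the leaf individuals. Hence the codes can be built bottom-up from the labeled leaves, in the style of the Aho--Hopcroft--Ullman tree-canonization algorithm, so that a total order on the individuals of each pedigree falls out.

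In detail, I would first give each individual $v$ its \emph{level} $L(v)$, the length of the longest directed path from $v$ to a sink; this is well defined since the pedigree graph is a finite DAG, is computable in linear time from a topological sort plus one backward pass, satisfies $L(v)=0$ iff $v$ is a leaf, and has $L(u)<L(v)$ for every child $u$ of $v$. I then process levels $0,1,2,\dots$ in increasing order, handling both pedigrees together. At level $0$ I sort all the leaves (from both pedigrees) by label and set each leaf's code to its rank, so that equal codes mean equal labels. At level $h>0$, for each $v$ with $L(v)=h$ I form the tuple consisting of $s(v)$ followed by the sorted list of the codes already assigned to the children of $v$, radix-sort all the tuples produced at this level, and assign a fresh integer code to each distinct tuple (equal tuples getting equal codes), keeping the code ranges of different levels disjoint. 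Since the total length of all tuples is $O(\sum_v \mathrm{outdeg}(v)) = O(|E(P)|+|E(P')|)$, the standard bucket-sorting makes the whole computation run in linear time.

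The crux is the claim that \emph{in a leaf-labeled pedigree no two distinct individuals receive the same code}. I would prove this by choosing, if it failed, a pair $v\neq v'$ with equal codes and with $L(v)$ (equivalently $L(v')$, since the code records the level) minimal. If $v$ is a leaf then so is $v'$ and $\ell(v)=\ell(v')$, contradicting injectivity of $\ell$. Otherwise $v$ and $v'$ are non-leaves with the same gender and the same sorted list of children-codes, so there is a bijection $\sigma$ between their children with $\mathrm{code}(u)=\mathrm{code}(\sigma(u))$ for each child $u$ of $v$. If $\sigma(u)=u$ for every child $u$, then $v$ and $v'$ are both parents of that common child; but a non-founder has exactly two parents of opposite gender, forcing $s(v)\neq s(v')$ and contradicting that their codes, hence genders, agree. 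So some child $u$ has $\sigma(u)\neq u$ with $\mathrm{code}(u)=\mathrm{code}(\sigma(u))$, a pair at a strictly smaller level, contradicting minimality. As a byproduct this shows a leaf-labeled pedigree has only the trivial automorphism, so its isomorphism to any other pedigree, when one exists, is unique.

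Correctness then follows. Because labels are unique throughout a pedigree, a vertex's code recursively determines the isomorphism type (with leaf labels) of the sub-DAG below it: one reconstructs that sub-DAG by gluing the sub-DAGs below the children along common descendants, which are recognizable precisely because distinct individuals have distinct codes. Consequently, if a pedigree isomorphism $\psi$ exists then $\mathrm{code}(v)=\mathrm{code}(\psi(v))$ for all $v$, so the two pedigrees have identical sorted lists of codes and $\psi$ is forced to be the map sending each individual of $\mathcal P$ to the unique individual of $\mathcal P'$ with the same code; conversely, if those sorted code-lists are equal then the code-matching map is a bijection preserving genders and labels (from the level-$0$ construction) and preserving edges (a vertex's code fixes the codes, hence by uniqueness the identities, of its children), i.e.\ a pedigree isomorphism. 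So the algorithm reports ``isomorphic,'' together with the code-matching map, exactly when the two sorted code-lists coincide --- which in particular requires $|I(P)|=|I(P')|$ --- and a final linear-time pass can re-verify the map. I expect the main obstacle to be exactly the injectivity/rigidity claim of the previous paragraph: it is what makes leaf-labeled instances tractable, in contrast to the unlabeled case shown GI-hard below. The only other care needed is that pedigree graphs are DAGs rather than trees, so an individual may be approached from below through several children and may share descendants with its siblings; processing strictly level by level and keying everything on the globally unique labels (through the codes) is what lets the tree-canonization method carry over, and keeps the running time linear.
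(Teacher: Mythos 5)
Your proof is correct in substance, but it takes a genuinely different route from the paper. The paper defines a ``gender topological sort'': process the leaves in sorted label order, do a DFS backward from each leaf into its ancestors visiting the female parent before the male parent, and take the finishing order as a total order; since this traversal is fully determined by topology, gender, and leaf labels, any isomorphism must preserve the order, so the only candidate map is the order-matching bijection, which is then verified in linear time. You instead build Aho--Hopcroft--Ullman-style canonical codes bottom-up by level (longest path to a sink), and your key contribution is the explicit rigidity lemma --- distinct individuals in a leaf-labeled pedigree get distinct codes, proved by a minimal-level counterexample using injectivity of $\ell$ and the in-degree-two/opposite-gender axioms --- from which the forced code-matching map and the triviality of the automorphism group follow. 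The paper's approach is lighter (a DFS finishing order is automatically a total order, so no injectivity lemma is needed), while yours makes the rigidity of leaf-labeled pedigrees explicit and yields a genuine canonical form, which also explains structurally why the labeled case escapes the GI-hardness of the unlabeled case. One small fix: your level-$0$ codes record only the label, not the gender, so ``equal sorted code-lists'' alone does not imply isomorphism when two same-labeled leaves have opposite genders (a two-children nuclear family with swapped leaf genders gives equal code-lists but no pedigree isomorphism); either fold $s(v)$ into the leaf codes or make the final linear-time verification pass part of the decision (as the paper does with its candidate $\phi$), and note that the forced-uniqueness direction of your argument is unaffected.
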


\begin{proof}

We define gender topological sort as follows. Recall that the traditional topological sort algorithm does a depth first search (DFS), and when each node is finished being visited, it gets pushed into the ordered list. Gender topological sort consists of a DFS of the ancestor tree of each leaf, where the female parent is always visited before the male parent (i.e.~the DFS visits in the opposite direct of the directed edge).  From a single leaf, this rule determines the order in which the ancestral nodes of the leaf are visited.  Now, we simply use the leaf labeling to sort the leaves, and we begin our DFS from each leaf in sorted order.  This algorithm finds a total order on the nodes of the pedigree that is fully determined by the topology, gender, and leaf labels of the pedigree.

Let the binary relation $<$ be the total order found by the above gender topological sort.  Let $n = |I(P)| = |I(P')|$. (If $|I(P)| \neq |I(P')|$ then there is no isomorphism.)  The nodes of pedigree ${\cal P}$ are ordered via gender topological sort so that $p_1 < p_2 < ... < p_n$ where $p_i \in I(P)$.  Similarly, the nodes of pedigree ${\cal P'}$ are ordered so that $p'_1 < p'_2 < ... < p'_n$ where $p'_i \in I(P')$.  Because isomorphism preserves topology, gender, and labels, it must also preserve this total order. Thus, if an isomorphism exists between ${\cal P}$ and ${\cal P}$', it must be $\phi$ defined as $\phi(p_i) = p'_i$. So to check whether ${\cal P}$ and ${\cal P}'$ are isomorphic, it suffices to compute the gender topological sort for each pedigree, and then check whether $\phi$ is an isomorphism.

The running time of this isomorphism algorithm is linear, since the gender topological sort is linear, and after obtaining $\phi$, checking that the genders, edges, and labels are preserved also requires linear time.
\end{proof}

Notice that a small modification of this algorithm can find leaf-labeled subgraph isomorphisms.


\section{Algorithms for Computing the Edit Distance}
\label{sec:editdistancealgs}

We show later in this paper that, even for monogamous pedigrees, there is no polynomial-time approximation scheme for the edit distance problem unless $P=NP$. In this section, we show that if we restrict the scope of the problem, efficient algorithms are possible. Specifically, we will give exact, efficient algorithms for the following two restricted cases of the pedigree edit distance problem. In this section we assume that the pedigrees are connected, but this condition can be easily removed.

\begin{enumerate}
\item The case in which the two pedigrees are generational, compatibly leaf-labeled, and both have two generations.
\item The case in which the two pedigrees are generational, compatibly leaf-labeled, and ``sufficiently similar''. By sufficiently similar we mean that there exists an optimal matching between the pedigrees that preserves generations and that for any two consecutive generations $i$ and $i+1$, the distance between the two sub-pedigrees obtained by restricting both pedigrees to generations $i$ and $i+1$ is small. Our algorithm for this case has the advantage that its run-time improves as the pedigrees become more similar.
\end{enumerate}

We then give a randomized heuristic that appears to work well in the general case and is based on an alternate characterization of pedigrees in terms of lists of descendants rather than parent-child relationships, as well as a faster heuristic for the second case listed above.

In the rest of this section, we will denote the two pedigrees under consideration by $\mathcal{P} = (P,s,X,\ell)$ and $\mathcal{P'} = (P',s',X',\ell')$.


\subsection{Exact Algorithm for Two-Generation, Compatibly Leaf-Labeled Pedigrees}
\label{subsec:leaflabelledisomorphism}

When $\mathcal{P}$ and $\mathcal{P'}$ are connected, generational, compatibly leaf-labeled, and have two generations each, the edit distance between them can be calculated in polynomial time. We find this distance by constructing two maximum-weight bipartite matching instances, one for each gender, whose solutions give us an optimal matching between $\mathcal{P}$ and $\mathcal{P'}$.  In doing this, we are maximizing the number of well-matched edges, which is equivalent to minimizing the distance.

Recall that $I_g(P)$ is the set of individuals in the $g$th generation of $P$. Our assumption that both pedigrees are compatibly leaf-labeled determines the matching $M$ on $I_2(P)$: map each $v \in I_2(P)$ to $(\ell')^{-1}(\ell(v)) \in I_2(P')$. In addition, we may assume without loss of generality that no individuals in the oldest generation of $P$ and $P'$ are labeled. This is because the labels that $I_1(P)$ and $I_1(P')$ share determine the matching on those vertices, and the labels not shared are irrelevant to the edit distance.

It is left to extend $M$ optimally to $I_1(P)$.  To do this, we first define $I_1^f(P)$ and $I_1^m(P)$ to be the females and males of $I_1(P)$ respectively, and define $I_1^f(P')$ and $I_1^m(P')$ analogously. Now, for each gender $s$, we construct a complete bipartite graph $G^s$ with left vertices $I_1^s(P)$ and right vertices $I_1^s(P')$ where the weight assigned to an edge $(u,v)$ is the number of children of $u$ who are matched by $M$ to children of $v$. Together with Proposition~\ref{prop:dualToEditDistance}, the following proposition establishes that solving the maximum-weight bipartite matching instances $G^f$ and $G^m$ will yield an optimal matching of $\mathcal{P}$ and $\mathcal{P'}$.

\begin{proposition}
For $s \in \{m,f\}$, let $M^s$ be a perfect matching in $G^s$, and extend the matching $M$ to a matching $\bar{M}$ defined on all of $I(P)$ as follows: for $v \in I_1(P)$, define $M(v)$ to be the vertex matched to $v$ by $M^{s(v)}$. Then the number of edges well-matched by $\bar{M}$ is the sum of the weights of $M^f$ and $M^m$.
\end{proposition}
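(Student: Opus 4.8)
The plan is to exploit the fact that, because both pedigrees are generational with exactly two generations, every edge of $P$ runs from a vertex of $I_1(P)$ to a vertex of $I_2(P)$, and likewise for $P'$. First I would check that $\bar M$ is a legitimate matching between $\mathcal{P}$ and $\mathcal{P'}$: it respects gender because each $M^s$ pairs same-gender vertices and $M$ already respects gender on $I_2(P)$; it respects the labellings because (after the WLOG reduction in which no founder is labeled) all shared labels live on the leaves $I_2(P)$, where $\bar M$ agrees with $M$; and it is injective because each $M^s$ is injective, $M$ is injective on $I_2(P)$, and the images $\bar M(I_1(P)) \subseteq I_1(P')$ and $\bar M(I_2(P)) \subseteq I_2(P')$ are disjoint since $\mathcal{P'}$ is generational. (If $|I_1^s(P)| \neq |I_1^s(P')|$ one pads the smaller side of $G^s$ with isolated, childless individuals; by the observation that edge-less vertices are free, this changes neither the edit distance nor any well-matched count, so we may assume each $G^s$ is balanced and $M^s$ is a genuine perfect matching.)

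The core of the argument is then a partition of the well-matched edge set $W_{\bar M}$ according to the gender of the parent endpoint. Fix $s \in \{m,f\}$ and consider an edge $(u,v) \in E(P)$ with $s(u) = s$; since $P$ has two generations, $u \in I_1^s(P)$ and $v \in I_2(P)$, and $\bar M(u) = M^s(u)$. By definition $(u,v)$ is well-matched iff $(\bar M(u), \bar M(v)) \in E(P')$, i.e.\ iff $M(v) = (\ell')^{-1}(\ell(v))$ is a child of $M^s(u)$ in $P'$. Hence the number of well-matched edges of $P$ whose parent endpoint is $u$ equals $|\{\, v : (u,v) \in E(P),\ (M^s(u), M(v)) \in E(P')\,\}|$, which is exactly the weight that $G^s$ assigns to the edge $(u, M^s(u))$ --- namely, the number of children of $u$ matched by $M$ to children of $M^s(u)$.

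Summing this identity over all $u \in I_1^s(P)$ shows that the number of well-matched edges with a gender-$s$ parent equals the total weight of $M^s$ in $G^s$. Because every edge of $P$ has exactly one parent endpoint, which has a well-defined gender, these two edge sets (for $s = m$ and $s = f$) partition $W_{\bar M}$, so $|W_{\bar M}|$ equals the sum of the weights of $M^f$ and $M^m$, as claimed. Combined with Proposition~\ref{prop:dualToEditDistance}, this shows that computing a maximum-weight perfect matching in each of $G^f$ and $G^m$ yields a matching realizing the edit distance.

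I do not expect a genuinely hard step here; the argument is essentially a careful double count. The one place that needs attention is the bookkeeping that makes the partition clean --- confirming that in a two-generation generational pedigree every edge goes from generation $1$ to generation $2$ (so no edge is internal to a generation or skips one), and that $\bar M$ restricted to $I_1^s(P)$ is literally $M^s$, so that the weight of $(u, M^s(u))$ in $G^s$ coincides edge-for-edge with the well-matched edges out of $u$.
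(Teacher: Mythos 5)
Your argument is correct. Note that the paper states this proposition without proof, so there is no authorial argument to compare against; your double count --- partitioning the well-matched edges by the gender of the parent endpoint, using the fact that in a two-generation generational pedigree every edge runs from $I_1(P)$ to $I_2(P)$, and observing that the well-matched edges out of $u$ are counted exactly by the weight of $(u, M^{s}(u))$ in $G^{s}$ --- is precisely the routine verification the authors leave implicit. Your added care in checking that $\bar M$ is a legitimate pedigree matching and in padding $G^{s}$ with childless vertices when $|I_1^{s}(P)| \neq |I_1^{s}(P')|$ (so that a perfect matching exists) actually tightens a detail the paper glosses over.
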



\subsection{Exact Algorithm for Sufficiently Similar, Generational, Compatibly Leaf-Labeled Pedigrees}
\label{subsec:DP}

Suppose that there exists an optimal matching $M$ between ${\cal P}$ and ${\cal P'}$ that is \em generation preserving \em (i.e., such that the generation of $v$ equals the generation of $M(v)$ for all individuals $v$ matched by $M$). We will also assume for simplicity that ${\cal P}$ and ${\cal P'}$ are each made up of $g$ generations of $m$ males and $m$ females each, though this assumption can be easily removed. If the pedigree graphs $P$ and $P'$ are similar enough, we can use dynamic programming to find an optimal matching between them in time exponential only in the edit distance. Thus, we have a polynomial-time algorithm if, for any two consecutive generations $i$ and $i+1$, the edit distance between the two sub-pedigrees obtained by restricting both pedigrees to generations $i$ and $i+1$ is at most $O(log(n)/g)$, where $n$ is the number of pedigree individuals.

To describe our algorithm, we first introduce some notation. Given some $S \subseteq \{1,2,...,g\}$, we let ${\cal P}|_S$ denote the minimal sub-pedigree of $\mathcal{P}$ that contains $I_i(P)$ for every $i \in S$, and we define $\mathcal{P'}|_S$ analogously. We can then write $\mathcal{M}(S)$ to denote the set of all generation-preserving matchings from ${\cal P}|_S$ to ${\cal P'}|_S$; note that 
\[\mathcal{M}(S) = \times_{i \in S}{\mathcal{M}(\{i\})}\]
where $\mathcal{M}(\{i\})$ is the set of matchings of generation $i$.
Given a generation-preserving matching $M$, let $d_S(M)$ be the match distance incurred by $M$ restricted to be a matching between ${\cal P}|_S$ and $\mathcal{P'}|_S$, and let $B_i(M)$ be the edit distance between ${\cal P}|_{\{i,\ldots,g\}}$ and ${\cal P'}|_{\{i,\ldots,g\}}$ taken only over matchings that agree with $M$ wherever $M$ is defined.

As in the previous section, we assume without loss of generality that \em only \em the leaf individuals of either pedigree are labeled.

The algorithm we present rests on the following relation.
\begin{lemma}
\label{DPrelation}
For every $i \in \{1,\ldots,g-1\}$, and for every $M \in \mathcal{M}(\{i\})$, we have
\begin{equation} 
B_i(M) = \min_{M' \in \mathcal{M}(\{i+1\})} {B_{i+1}(M') + d_{\{i,i+1\}}(M \cup M')}
\end{equation}
where $M \cup M' \in \mathcal{M}(\{i,i+1\})$ denotes the matching that equals $M$ on $\mathcal{P}|_{\{i\}}$ and $M'$ on $\mathcal{P'}|_{\{i+1\}}$.
\end{lemma}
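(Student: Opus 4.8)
The plan is to prove the two inequalities separately, using the fact that a generation-preserving matching on $\mathcal{P}|_{\{i,\ldots,g\}}$ decomposes uniquely as a matching on generation $i$ together with a matching on generations $\{i+1,\ldots,g\}$, and that the match distance is \emph{additive over the disjoint edge sets} determined by which pair of consecutive generations each edge connects. Concretely, since both pedigrees are generational, every edge of $\mathcal{P}|_{\{i,\ldots,g\}}$ (and of $\mathcal{P'}|_{\{i,\ldots,g\}}$) goes from some generation $j$ to generation $j+1$ with $i \le j \le g-1$; an edge is well-matched by a generation-preserving matching $N$ if and only if both its endpoints are in the domain of $N$ and their images are joined by an edge. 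Because a matching on generation $i$ only affects whether edges between generations $i$ and $i+1$ are well-matched, one can split $d_{\{i,\ldots,g\}}(N)$ cleanly: the contribution from the "$i,i+1$ layer" depends only on $N$ restricted to generations $i$ and $i+1$, and the rest depends only on $N$ restricted to generations $\{i+1,\ldots,g\}$.

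First I would set up notation: fix $i$ and $M \in \mathcal{M}(\{i\})$. For $\le$: let $M'$ achieve the minimum on the right-hand side, and let $N$ be a matching on $\mathcal{P}|_{\{i+1,\ldots,g\}}$ extending $M'$ that achieves $B_{i+1}(M')$. Then $M \cup N$ is a generation-preserving matching on $\mathcal{P}|_{\{i,\ldots,g\}}$ agreeing with $M$, so $B_i(M) \le d_{\{i,\ldots,g\}}(M \cup N)$. I would then show that $d_{\{i,\ldots,g\}}(M \cup N)$ splits as the distance incurred on the $\{i,i+1\}$ layer plus the distance incurred on the $\{i+1,\ldots,g\}$ layers, i.e. $d_{\{i,\ldots,g\}}(M \cup N) = d_{\{i,i+1\}}(M \cup M') + d_{\{i+1,\ldots,g\}}(N)$; combined with $d_{\{i+1,\ldots,g\}}(N) = B_{i+1}(M')$ this gives the bound. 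For $\ge$: take an optimal matching $N$ on $\mathcal{P}|_{\{i,\ldots,g\}}$ agreeing with $M$, write $M' = N|_{I_{i+1}}$ and $N' = N|_{\{i+1,\ldots,g\}}$, apply the same splitting identity, and note $d_{\{i+1,\ldots,g\}}(N') \ge B_{i+1}(M')$ by definition of $B_{i+1}$, while $d_{\{i,i+1\}}(M \cup M')$ is one of the terms in the minimum.

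The key technical step — and the only real obstacle — is justifying the additive splitting identity $d_{\{i,\ldots,g\}}(N) = d_{\{i,i+1\}}(N|_{\{i,i+1\}}) + d_{\{i+1,\ldots,g\}}(N|_{\{i+1,\ldots,g\}})$ for a generation-preserving $N$. This follows from Definition \ref{def:matchdistance} by partitioning $E(\mathcal{P}|_{\{i,\ldots,g\}})$ according to the source generation of each edge: the edges with source in generation $i$ are exactly $E(\mathcal{P}|_{\{i,i+1\}})$, and the remaining edges are exactly $E(\mathcal{P}|_{\{i+1,\ldots,g\}})$, with an analogous partition of $E(\mathcal{P'}|_{\{i,\ldots,g\}})$; since $N$ is generation-preserving, well-matchedness of an edge with source in generation $j$ depends only on $N|_{\{j,j+1\}}$, so $W_N$ and $N(W_N)$ split compatibly with these partitions. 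Here one uses that the sub-pedigrees in question are induced subgraphs on unions of whole generations, so no edges are lost at the boundary except that $E(\mathcal{P}|_{\{i+1,\ldots,g\}})$ genuinely excludes the $i$-to-$(i+1)$ edges — which is exactly the point. I would also remark that the statement $\mathcal{M}(S) = \times_{i \in S} \mathcal{M}(\{i\})$, already asserted in the text, is what makes "$M \cup M'$" and these restrictions well-defined.
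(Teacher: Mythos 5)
The paper states Lemma~\ref{DPrelation} without proof (it moves directly to the resulting DP algorithm), so there is no in-paper argument to compare against; your proposal supplies exactly the standard argument one would expect, and it is correct. The two-inequality structure is right, and the heart of the matter is indeed the additive splitting identity, which you justify properly: because the pedigrees are generational, every edge of $\mathcal{P}|_{\{i,\ldots,g\}}$ and of $\mathcal{P'}|_{\{i,\ldots,g\}}$ joins consecutive generations, the sub-pedigrees are induced on unions of whole generations, and for a generation-preserving $N$ both $W_N$ and $N(W_N)$ split along the layer partition, so Definition~\ref{def:matchdistance} gives $d_{\{i,\ldots,g\}}(N) = d_{\{i,i+1\}}(N|_{\{i,i+1\}}) + d_{\{i+1,\ldots,g\}}(N|_{\{i+1,\ldots,g\}})$. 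One point you should make explicit, since the paper's definition of $B_i(M)$ ("the edit distance \ldots taken only over matchings that agree with $M$") does not literally say so: the minimization in $B_i$ must be read as ranging over \emph{generation-preserving} matchings. Your $\ge$ direction needs this, since you take an optimal $N$ agreeing with $M$, restrict it to generation $i+1$, and claim the restriction lies in $\mathcal{M}(\{i+1\})$ and that the splitting identity applies to $N$; neither step is available for a non-generation-preserving optimizer. This reading is clearly the intended one (all of $\mathcal{M}(S)$ is defined to be generation-preserving, and the section's standing hypothesis is that some optimal matching preserves generations), so this is a matter of stating the convention rather than a gap. You might also add one sentence noting that $M \cup N$ and the restrictions are genuine pedigree matchings: injectivity holds because the generations are mapped disjointly, and the label condition is vacuous off generation $g$ under the standing assumption that only leaves are labeled, while restriction never removes generation $g$. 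With those remarks added, the proof is complete.
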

Lemma~\ref{DPrelation} gives rise to a simple dynamic programming algorithm: start with the matching of the $g$th generation determined by the labellings of the leaves, then iteratively work up the pedigree, using the lemma above to find, for every $i$, the values of $B_i(M)$ for every $M \in \mathcal{S}(\{i\})$. The edit distance is then given by
\[\min_{M \in \mathcal{M}(\{1\})}{B_1(M)}.\]

However, the problem with this straightforward algorithm is that because it needs to consider every possible matching $M \in \mathcal{M}(\{i\})$, its run-time is factorial in $m$, the number of males/females in each generation. At each generation, there are $(m!)^2$ possible matchings $M$ to process, and performing the minimization for each matching takes time $O((m!)^2)$. Therefore, the run-time of this algorithm is $O(g(m!)^{4})$.

We can improve the algorithm if we know that the two pedigrees under consideration are sufficiently similar at each generation and so there is no need to consider all matchings for each generation.
Suppose we are promised that an optimal matching $\bar{M}$ between ${\cal P}$ and ${\cal P'}$ has $d_{\{i,i+1\}}(\bar{M}) < k$ for every $1 \leq i < g$. Then in the algorithm above we would only need to process, for each $i$ and each $M \in \mathcal{M}(\{i\})$, the matchings $M' \in \mathcal{M}(\{i+1\})$ such that $d_{\{i,i+1\}}(M \cup M') < k$. This enumeration can be done, for a fixed $M' \in \mathcal{M}(\{i+1\})$ of the previous generation, by recursively matching vertices to gradually define $M \in \mathcal{M}(\{i\})$, all the time avoiding any assignment that would violate the condition $d_{\{i,i+1\}}(M \cup M') < k$.  
The case of a small edit distance is particularly interesting.  Our simulations, see Fig.~\ref{fig:accuracy}, show that when some number $x$ of random changes are made to a pedigree, the edit distance is close to $x$ when $x$ is small but for larger values of $x$, some changes cancel each other out and the edit distance grows more slowly than $x$.  Thus, when the edit distance is small, it corresponds more closely to the actual number of changes made.

How many matchings are considered by this method? The following two lemmas establish an upper bound of $m^{2k}$ on this quantity.

\begin{lemma}
\label{lem:recurrence}
For every fixed $M' \in \mathcal{M}(\{i+1\})$, the number of matchings $M \in \mathcal{M}(\{i\})$ such that $d_{\{i,i+1\}}(M \cup M') < k$ is at most $T(m,k)^2$, where $T$ satisfies the recurrence relation
\[ T(n,c) = T(n-1,c) + (n-1)T(n-1,c-2) \]
with initial conditions $T(1,\cdot) = 1$ and $T(n,0) = T(n,1) = 1$.
\end{lemma}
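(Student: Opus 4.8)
The plan is to set up the count of valid matchings $M \in \mathcal{M}(\{i\})$ recursively, splitting on how a single distinguished individual of generation $i$ in $\mathcal{P}$ is matched. Fix $M' \in \mathcal{M}(\{i+1\})$. A matching $M \in \mathcal{M}(\{i\})$ is a pair of bijections, one on the $m$ males of generation $i$ and one on the $m$ females, and $d_{\{i,i+1\}}(M \cup M')$ decomposes additively as a contribution from the male matching plus a contribution from the female matching (edges between generation $i$ and generation $i+1$ in $P$ emanate from a generation-$i$ parent, whose gender is fixed, so the well-matched/unmatched bookkeeping separates cleanly by the gender of the parent). Hence the number of valid $M$ with $d_{\{i,i+1\}}(M\cup M') < k$ is at most $\bigl(\text{number of valid male bijections}\bigr)\cdot\bigl(\text{number of valid female bijections}\bigr)$, and it suffices to bound the number of bijections on one gender class of size $m$ that incur fewer than $k$ edit-distance units; call this $T(m,k)$.

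Next I would analyze $T(n,c)$ — the number of bijections $\sigma$ on an $n$-element set of same-gender generation-$i$ individuals whose induced match-distance contribution (relative to the fixed $M'$ on generation $i+1$) is less than $c$ — by peeling off one individual $v$. Either $\sigma$ matches $v$ "for free" (costing $0$ edit units against $v$'s incident cross-generation edges), in which case we recurse on the remaining $n-1$ individuals with the same budget $c$, giving the term $T(n-1,c)$; or $\sigma$ matches $v$ to some other individual, of which there are $n-1$ choices, and any such non-trivial assignment of $v$ forces at least $2$ edit units (one edge of $P$ around $v$ fails to be well-matched and, by a symmetric argument on the $P'$ side, one edge around $M(v)$ is likewise uncovered — or more simply, a mismatched parent pushes at least one deletion and one insertion), so we recurse with budget $c-2$, giving $(n-1)T(n-1,c-2)$. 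Summing yields
\[ T(n,c) = T(n-1,c) + (n-1)\,T(n-1,c-2), \]
with the base cases $T(1,\cdot)=1$ (a single individual has a unique bijection) and $T(n,0)=T(n,1)=1$ (budget below $2$ forces the trivial, identity-like bijection, which is unique). Combining with the gender-decomposition bound of the first paragraph gives that the number of matchings $M$ considered is at most $T(m,k)^2$, as claimed.

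The main obstacle I anticipate is making the "$2$ edit units per non-trivial assignment of $v$" step airtight: one has to argue carefully that moving $v$ off its unique "correct" partner cannot be absorbed at cost less than $2$, even accounting for edges that might become well-matched elsewhere as a side effect, and that the male/female contributions to $d_{\{i,i+1\}}$ really are independent so the product bound is legitimate rather than just an over-count in one direction. Fixing a reference (e.g. the restriction of the promised optimal $\bar M$, or the "diagonal" matching) against which to measure the cost, and showing the cost is minorized by twice the number of non-fixed points, is the delicate part; once that is in hand the recurrence and base cases are routine. The subsequent (companion) lemma will then solve this recurrence to obtain the closed bound $m^{2k}$ on the total number of matchings enumerated.
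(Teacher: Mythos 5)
Your proposal is correct and follows essentially the same argument as the paper: bound each gender class separately and multiply (giving the square), then recurse by peeling off one individual, noting that the unique zero-cost choice is forced by where its already-matched children go under $M'$, while each of the other $n-1$ choices costs at least $2$ (one deletion plus one addition), with the same base cases. The ``delicate point'' you flag about the cost-$2$ step is in fact treated just as informally in the paper's own proof, so no genuinely new ingredient is needed beyond what you wrote.
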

\begin{proof}
First, suppose that there are only $m$ individuals to match (i.e. that there is only one gender). Initially, there are $m$ individuals to match and $k$ ``cost points'' that can be used in doing so. In the best case, given some $u \in \mathcal{P}|_{\{i\}}$, there is at most one choice for $M(u)$ that does not increase the cost of the matching being built.  This follows from the fact that $u$ has at least one child (otherwise $u$ is a labeled leaf and so $M(u)$ is already determined), who is already matched somewhere by $M'$. Besides this option for $M(u)$, there are at most $m - 1$ other options, each of which will increase the cost of the matching by at least $2$ (since at least one edge will have to be deleted and one edge will have to be added). This establishes the recurrence. The initial conditions follow from the following facts:
\begin{enumerate}
\item When one individual is left to be matched, there is only one possible way to complete the matching being built.
\item When the matching being built already has distance $k$ (i.e. there are $0$ cost points left), there is at best only one way to complete the matching.
\end{enumerate}
This bounds the number of matchings of each gender by $T(m,k)$. Since this process occurs independently for each gender, the number of total matchings is at most $T(m,k)^2$.
\end{proof}

\begin{lemma}\label{DPruntime}
The recurrence $T$ in Lemma~\ref{lem:recurrence} satisfies $T(n,c) = O(n^c)$.
\end{lemma}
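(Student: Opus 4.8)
The plan is to prove by induction on $n+c$ (or equivalently a double induction, first on $c$ and then on $n$) that $T(n,c) \le C \cdot n^c$ for a suitable absolute constant $C$, using the recurrence $T(n,c) = T(n-1,c) + (n-1)\,T(n-1,c-2)$ together with the initial conditions $T(1,\cdot)=1$ and $T(n,0)=T(n,1)=1$. The base cases are immediate: when $c\in\{0,1\}$ we have $T(n,c)=1 \le C\,n^c$ as soon as $C\ge 1$, and when $n=1$ we have $T(1,c)=1 \le C\cdot 1^c$. So the whole content is the inductive step for $n\ge 2$ and $c\ge 2$.

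For the inductive step, assume the bound holds for all pairs with smaller $n$ (keeping $c$ fixed, and also for $c-2$). Then
\[
T(n,c) \;=\; T(n-1,c) + (n-1)\,T(n-1,c-2) \;\le\; C(n-1)^c + (n-1)\cdot C(n-1)^{c-2} \;=\; C(n-1)^{c-2}\bigl[(n-1)^2 + (n-1)\bigr].
\]
Now $(n-1)^2 + (n-1) = (n-1)n \le n^2$, so the right-hand side is at most $C(n-1)^{c-2}\,n^2 \le C\,n^{c-2}\cdot n^2 = C\,n^c$, which closes the induction. In fact the computation shows the bound holds with $C=1$, i.e. $T(n,c)\le n^c$ outright, so no slack constant is even needed; one should double-check the edge case $c=2$ where $c-2=0$ and $T(n-1,0)=1$, but the same chain of inequalities goes through verbatim. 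This gives $T(n,c) = O(n^c)$ as claimed, and combined with Lemma~\ref{lem:recurrence} yields the advertised bound of (roughly) $m^{2k}$ on the number of matchings processed per generation, since $T(m,k)^2 \le (m^k)^2 = m^{2k}$.

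I do not anticipate a genuine obstacle here — the recurrence is tailored so that the "$+(n-1)$" factor exactly absorbs the drop from $c$ to $c-2$. The only thing to be careful about is bookkeeping: making sure the induction is well-founded (the recursive calls decrease $n$, and for the $c-2$ term they also decrease $c$, so ordering by $n$ with $c$ secondary suffices), and making sure the stated initial conditions actually cover every pair not reached by the step (pairs with $n=1$, and pairs with $c\le 1$). A small remark worth including: one could alternatively expand the recurrence combinatorially — $T(n,c)$ counts something like the number of ways to pick at most $\lfloor c/2\rfloor$ disjoint "swaps" among $n$ elements, i.e. $\sum_{j=0}^{\lfloor c/2\rfloor}\binom{n}{2j}(2j-1)!!$, which is manifestly $O(n^c)$ — but the direct induction above is shorter and self-contained.
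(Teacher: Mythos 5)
Your proof is correct and takes essentially the same route as the paper: an induction driven directly by the recurrence $T(n,c)=T(n-1,c)+(n-1)T(n-1,c-2)$ with the stated initial conditions as base cases. The only difference is organizational --- the paper inducts on $c$ and bounds the successive differences $T(n,c)-T(n-1,c)=(n-1)T(n-1,c-2)=O(n^{c-1})$, whereas you run a direct induction on $n$ and obtain the slightly sharper explicit bound $T(n,c)\le n^{c}$ with no hidden constant, which is a harmless (indeed tidier) variant of the same argument.
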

\begin{proof}
We proceed by induction on $c$. The initial conditions of $T$ give us our base cases of $c = 0, 1$. The general case follows from bounding the difference between successive values of $T(\cdot, c)$: the recurrence gives us that $T(n,c) - T(n-1,c) = (n-1)T(n-1,c-2)$, which is $n \cdot O(n^{c-2}) = O(n^{c-1})$ by the inductive hypothesis.
\end{proof}

Thus, the number of matchings to be considered at generation $i$ is $m^{2k}$ times the number of matchings to be considered at generation $i+1$. So the run-time of this algorithm is dominated by its last step, in which matchings between the oldest generations of ${\cal P}$ and ${\cal P'}$ are considered and the best one is chosen; the number of these matchings is at most $O(m^{2k(g-1)}) = O(m^{2d})$ where $d$ is the maximum possible distance between the two pedigrees. Thus, if $k$ (the distance between pairs of successive generations) and $g$ (the number of generations) are small, the algorithm can efficiently calculate edit distance.

This algorithm always finds the correct edit distance, when the upper bound $k$ is known. When $k$ is not known, the algorithm can be adapted to a heuristic by guessing a reasonable $k$, and if there is a step with no matching within distance $k$, the algorithm is aborted and the randomized heuristic (described below) is used instead. In Section~\ref{sec:simulations}, we show by comparison to a branch-and-bound algorithm that tries all possible matchings that this heuristic adaptation often finds the correct edit distance when used on randomly generated pedigrees.


\subsection{Heuristic Improvement of Dynamic Programming Algorithm}
\label{subsec:heuristicDP}

We can turn the dynamic programming algorithm from Section~\ref{subsec:DP} into a faster heuristic by enumerating a still smaller set of matchings. For each generation and for a fixed labeling of the previous generation, we can create an instance of the maximum-weight bipartite matching problem. However, instead of solving the problem exactly, we can enumerate its $\gamma$ best solutions and consider those matchings only.  Since this can be done in time $O(\gamma m^3)$ (see~\cite{Chegireddy1987}), this improves the running-time of the algorithm to $O(m^3 \gamma^{g-1})$.


\subsection{Randomized Heuristic for Compatibly Leaf-Labeled Pedigrees}
\label{subsec:randomheuristic}
The randomized matching algorithm for regular pedigrees rests on the idea of viewing pedigrees not as lists of parent-child relationships, but rather as sets of so-called \em descendant splits\em. The descendant split of an individual $u \in I(P)$ is the set of individuals $v \in I(P)$ such that there is a directed path from $u$ to $v$. When a pedigree is fully labeled, its full set of descendant splits uniquely specifies it. For more on descendant splits see~\cite{Kirkpatrick2011b}.

The heuristic calculates the match distance incurred by a matching that is randomly selected as follows: at each generation, among individuals of the same gender, an individual $u \in I(P)$ is matched to $v \in I(P')$ with probability proportional to the number of individuals in the descendant splits of the two individuals which are identically labeled. This can be done in polynomial time by creating, for each generation, an $m \times m$ matrix of individual match probabilities for each gender (where there are $2m$ individuals per generation). The
matches are then drawn from these probability matrices (without replacement of previously matched individuals). This can be done multiple times to increase the chances of finding a `good' matching.

This algorithm is difficult to analyze because different leaves do not always have disjoint paths to the vertex being matched. However, we show in simulations (Section~\ref{sec:simulations}) that it performs reasonably well relative to a branch-and-bound algorithm that considers all possible matchings.


\section{Hardness Results}
\label{sec:hardness}

Having presented algorithms for various restrictions of the pedigree isomorphism and edit distance problems, we now establish the difficulty of solving the general versions of these problems. Specifically, we give hardness results for the pedigree isomorphism problem (GI-Hard), the sub-pedigree isomorphism problem (NP-Hard), the pedigree edit distance problem (APX-Hard), and the compatibly leaf-labeled pedigree edit distance problem (NP-Hard).
The first few proofs require that $X = \emptyset$, while the final proof considers compatibly leaf-labeled pedigrees.


\subsection{The Hardness of Pedigree Isomorphism and Sub-Pedigree Isomorphism}
\label{subsec:isomorphismhardness}
\subsubsection{Pedigree Isomorphism is GI-Hard}
We begin by showing that the pedigree isomorphism problem is as hard as the general graph isomorphism problem. Graph isomorphism is one of very few problems not known to be in P that is also not known to be NP-Hard. Problems that are as hard as graph isomorphism are known as GI-Hard. To show our hardness result, we reduce from bipartite graph isomorphism, which is known to be GI-Hard~\cite{Uehara2005}.

\paragraph{Reduction:} Given a bipartite graph $G = (V_1\cup V_2, E)$, we define a pedigree ${\cal P}(G) = (P,s)$. Intuitively, each vertex $u \in V_1$ is replaced with two vertices, a male denoted $u^m$ and a female denoted $u^f$, and for each edge $(u,v)$ in $E$, the couple corresponding to $u^m$ and $u^f$ have two children, a male denoted $(u,v)^m$ and a female denoted $(u,v)^f$. This gives as many couples corresponding to $v$ as there are edges going into $v$. To encode the fact that all of these couples correspond to edges with the same vertex $v \in V_2$, we have every member of the form $(u,v)^m$ for some $u \in V_1$ mate with every member of the form $(u',v)^f$ for some $u' \in V_1$ to obtain a female child which we denote $(u,u',v)^f$. Formally, this gives us the following
pedigree graph:
\begin{itemize}
\item $I(P) = I_1 \cup I_2 \cup I_3$, where
        \begin{itemize}
        \item $I_1 = \{u^m, u^f : u \in V_1\}$
        \item $I_2 = \{(u,v)^m, (u,v)^f : (u,v)\in E\}$
        \item $I_3 = \{(u,u',v)^f : (u,v),(u',v) \in E\}$ 
        \end{itemize}
\item $E(P) = E_1 \cup E_2$, where
        \begin{itemize}
        \item $E_1 = \{(u^s,(u,v)^{s'}) : (u,v)\in E, s,s' \in \{m,f\}\}$
        \item $F_2 = \{((u,v)^m,(u,u',v)^f), ((u,v)^f,(u',u,v)^f) : (u,v),(u',v) \in E\}$
        \end{itemize}                
\end{itemize}
By construction, we have:  
\begin{lemma} $P$ is a pedigree graph.
\end{lemma}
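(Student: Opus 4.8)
The plan is to verify the three defining conditions of a pedigree directly against the construction, after first pinning down the gender function $s$ that the reduction leaves implicit: $s$ assigns $m$ to every vertex whose name carries the superscript $m$ (the vertices $u^m$ and $(u,v)^m$) and $f$ to every vertex carrying the superscript $f$ (the vertices $u^f$, $(u,v)^f$, and $(u,u',v)^f$). This is well-defined because each vertex of $I(P)$ carries exactly one superscript.

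First I would establish acyclicity. The key observation is that $I(P)$ is partitioned into the three layers $I_1$, $I_2$, $I_3$, and inspection of $E_1$ and the second edge family $E_2$ shows that every edge runs from $I_1$ to $I_2$ or from $I_2$ to $I_3$. Hence the map sending $v \in I_j$ to $j$ is a topological order, so $P$, which is directed by construction, is acyclic.

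Next, the in-degree condition and the opposite-gender condition can be checked together, layer by layer. Vertices of $I_1$ are never the target of an edge, so they have in-degree $0$ (these are the founders). A vertex $(u,v)^{s'} \in I_2$ is the target of exactly the two $E_1$-edges coming from $u^m$ and from $u^f$, giving in-degree $2$ with parents of opposite gender. A vertex $(a,b,v)^f \in I_3$ is the target of exactly two edges from $E_2$: one from $(a,v)^m$ and one from $(b,v)^f$, both of which exist because $(a,v),(b,v)\in E$ by the definition of $I_3$; this again gives in-degree $2$ with parents of opposite gender. Having verified all three conditions, $P$ is a pedigree graph.

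I expect the only mildly delicate point to be reading the index conventions in $E_2$ carefully enough to confirm that each $I_3$ vertex $(a,b,v)^f$ has precisely the two in-neighbors $(a,v)^m$ and $(b,v)^f$ — in particular that the two templates in $E_2$ contribute exactly one in-edge each (the case $a=b$ is harmless, since $(a,v)^m$ and $(a,v)^f$ are still an opposite-gender couple). Everything else is an immediate transcription of the construction.
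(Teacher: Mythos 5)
Your proof is correct and takes essentially the same route as the paper, which simply asserts the lemma ``by construction'': you have just written out the routine layer-by-layer verification (acyclicity from the $I_1 \to I_2 \to I_3$ stratification, in-degree $0$ or $2$, and opposite-gender parents, including the harmless $u=u'$ case) that the paper leaves implicit.
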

Next, we show:
\begin{proposition}
\label{prop:GIhard}
Two bipartite graphs $G = (V_1 \cup V_2, E)$ and $G' = (V_1', V_2', E')$ with no isolated vertices are isomorphic if and only if the two pedigrees $\mathcal{P}(G) = (P,s)$ and $\mathcal{P}(G') = (P',s')$ are isomorphic.
\end{proposition}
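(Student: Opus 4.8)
The plan is to prove the two directions of the biconditional separately, with the reverse direction carrying essentially all of the work. For the forward direction, I would take an isomorphism $\psi : V_1 \cup V_2 \to V_1' \cup V_2'$ of the bipartite graphs (as is standard for bipartite graph isomorphism, assumed to send $V_1$ to $V_1'$ and $V_2$ to $V_2'$) and write down the induced pedigree isomorphism $\phi$ layer by layer: $\phi(u^m) = \psi(u)^m$, $\phi(u^f) = \psi(u)^f$ on $I_1$; $\phi((u,v)^m) = (\psi(u),\psi(v))^m$, $\phi((u,v)^f) = (\psi(u),\psi(v))^f$ on $I_2$; and $\phi((u,u',v)^f) = (\psi(u),\psi(u'),\psi(v))^f$ on $I_3$. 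Since $\psi$ is a bijection preserving $E \leftrightarrow E'$ and the bipartition, $\phi$ is a well-defined bijection; it visibly preserves gender, and a direct check that it carries the edge sets $E_1$ and $F_2$ of $\mathcal{P}(G)$ onto those of $\mathcal{P}(G')$ completes this direction.

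For the reverse direction, suppose $\phi : I(P) \to I(P')$ is a pedigree isomorphism. The first step is to show that $\phi$ respects the three-layer decomposition, i.e.\ $\phi(I_1) = I_1'$, $\phi(I_2) = I_2'$, and $\phi(I_3) = I_3'$. This follows from structural features of the construction: $I_1$ is exactly the set of founders (in-degree $0$), $I_3$ is exactly the set of leaf individuals (out-degree $0$ --- here the no-isolated-vertices hypothesis is used, as it forces every individual of $I_1$ and $I_2$ to have a child), and $I_2$ is everything else; since a pedigree isomorphism preserves in- and out-degree, it preserves each class.

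The second and main step is to recover a graph isomorphism $\psi$ from $\phi$. On $V_1$: the founders $u^m$ and $u^f$ share a nonempty child set, so $\phi$ maps the couple $\{u^m, u^f\}$ to a pair of founders of $P'$ with a common child, and inspecting that child's parents shows the pair is again a couple $\{w^m, w^f\}$; set $\psi(u) = w$. On $V_2$: for an edge $(u,v)$, the $I_3$-individual $(u,u,v)^f$ has parents $(u,v)^m$ and $(u,v)^f$, and because the two parents of any $I_3'$-individual share a second coordinate, this forces $\phi((u,v)^m) = (\psi(u),v')^m$ and $\phi((u,v)^f) = (\psi(u),v')^f$ for a single $v' \in V_2'$; then for any two edges $(u_1,v),(u_2,v)$ the $I_3$-individual $(u_1,u_2,v)^f$ forces $v'$ to depend only on $v$, so we set $\psi(v) = v'$. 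Running the same arguments with $\phi^{-1}$ shows $\psi$ is a bijection, after which $(u,v) \in E$ if and only if $(\psi(u),\psi(v)) \in E'$ falls out of the action of $\phi$ and $\phi^{-1}$ on the edge gadgets. I expect the delicate point to be that ``$\phi$ sends siblings to siblings'' alone does not pin down the second coordinate of $\phi((u,v)^m)$, since in $I_2'$ any male/female pair with a common first coordinate already has the same parent set; this is precisely why the $I_3$ gadget is present and why no isolated vertices is needed --- to guarantee enough $I_3$-individuals both to fix the second coordinates and then to verify bijectivity and edge-preservation of $\psi$.
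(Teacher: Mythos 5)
Your proposal is correct and follows essentially the same route as the paper: the same explicit layer-by-layer map in the forward direction, and in the converse the same two steps of (i) showing $\phi$ preserves the decomposition $I_1,I_2,I_3$ via sources and sinks and (ii) recovering $\psi$ on $V_1$ from the founder couples and on $V_2$ from the mating structure encoded by the $I_3$ children. You merely spell out, at the level of individual $I_3$ gadgets, the consistency argument that the paper phrases as ``the set of $I_2$-vertices corresponding to $v$ all mate with each other and no one else,'' which is a finer-grained but equivalent presentation of the same idea.
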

\begin{proof}
($\Rightarrow$:) Suppose we have a graph isomorphism $\varphi : V_1 \cup V_2 \rightarrow V_1' \cup V_2'$. It is easy to verify that the following definitions give a map $\phi : I(P) \rightarrow I(P')$ that is a pedigree isomorphism.
\begin{itemize}
\item $\phi(u^s) = \varphi(u)^s$ for $u^s \in I_1$
\item $\phi((u,v)^s) = (\varphi(u),\varphi(v))^s$ for $(u,v)^s \in I_2$
\item $\phi((u,u',v)^f) = (\varphi(u),\varphi(u'),\varphi(v))^f$ for $(u,u',v)^f \in I_3$
\end{itemize}

($\Leftarrow$:) We write $I(P) = I_1 \cup I_2 \cup I_3$ and write $I(P') = I_1' \cup I_2' \cup I_3'$. Using this notation, our assumption gives us an injective map $\phi:I_1 \cup I_2 \cup I_3 \rightarrow I_1'\cup I_2' \cup I_3'$ that is a graph isomorphism between $P$ and $P'$.

We observe that $\phi$ must map $I_j$ to $I_j'$ because $\phi$ preserves sources and sinks. We also note that $\phi$ must preserve familial relationships. We can therefore define the graph isomorphism $\varphi : V_1 \cup V_2 \rightarrow V_1' \cup V_2'$ as follows.
\begin{itemize}
\item For $u \in V_1$, set $\varphi(u) = u'$, where $u' \in V_1'$ is such that $\phi(u^s) = u'^s$ for $s \in \{m,f\}$.
\item For $v \in V_2$, let $U \subset V_1$ be the neighbors of $v$. Because $U$ is non-empty, there exists by construction a set of vertices in $I_2$ corresponding to $v$, all of whom mate with each other and no one else. Because it is a pedigree isomorphism, $\phi$ sends this set to a set in $I_2'$ all of whom mate with each other and no one else, and which thus corresponds to a vertex $v' \in V_2'$. We set $\varphi(v) = v'$.
\end{itemize}

We now show that if $(u,v)$ is an edge in $G$, then $(\varphi(u),\varphi(v))$ is an edge in $G'$, making $\varphi$ a graph isomorphism. Suppose $(u,v)$ is an edge in $G$. Then the vertex $(u,v)^m$ exists in $I(P)$, and there is an edge from $u^m$ to $(u,v)^m$ in $P$. Then in $P'$, there is an edge from $\phi(u^m) = \varphi(u)^m$ to $\phi((u,v)^m) = (\varphi(u),\varphi(v))^m$. But then there must be an edge from $\varphi(u)$ to $\varphi(v)$ in $G'$.
\end{proof}

The case of bipartite graphs with isolated vertices is easy to handle when checking for bipartite graph isomorphism: we ensure that there are the same number of isolated vertices in either graph, remove them, and then check for isomorphism. Therefore, Proposition~\ref{prop:GIhard}, together with the fact that bipartite graph isomorphism is at least as hard as general graph isomorphism, gives us the following theorem.
\begin{theorem}
The pedigree isomorphism problem is GI-hard.
\end{theorem}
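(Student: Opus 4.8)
The plan is to assemble a polynomial-time many-one reduction from the general graph isomorphism problem to the pedigree isomorphism problem by chaining three reductions, the last and only substantive link of which is already supplied by Proposition~\ref{prop:GIhard}. GI-hardness then follows by definition.

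First, I would invoke the known fact~\cite{Uehara2005} that bipartite graph isomorphism is GI-hard, which handles the passage from arbitrary graphs to bipartite graphs as a black box. Second, I would reduce bipartite graph isomorphism to bipartite graph isomorphism \emph{without isolated vertices}: given $G$ and $G'$, compare their numbers of isolated vertices, output a fixed ``no''-instance if these disagree, and otherwise delete all isolated vertices from both graphs; since any isomorphism must send isolated vertices to isolated vertices, the pruned graphs are isomorphic exactly when the originals are. Third, I would apply the map $G \mapsto \mathcal{P}(G)$ defined in this section, which by the preceding lemma always produces a legal pedigree graph and which, by Proposition~\ref{prop:GIhard}, satisfies $G \cong G'$ if and only if $\mathcal{P}(G) \cong \mathcal{P}(G')$ whenever $G, G'$ have no isolated vertices.

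To close the argument I would check that $G \mapsto \mathcal{P}(G)$ runs in polynomial time: the individuals split as $I(P) = I_1 \cup I_2 \cup I_3$ with $|I_1| = 2|V_1|$, $|I_2| = 2|E|$, and $|I_3| \le |E|^2$, and the edge set of $P$ has comparably many edges, so the construction has size $O(|V_1| + |E|^2)$ and is computable within that bound. Composing the three reductions sends an instance $(H,H')$ of general graph isomorphism to a pedigree pair $(\mathcal{P}(G),\mathcal{P}(G'))$ with $H \cong H'$ iff $\mathcal{P}(G) \cong \mathcal{P}(G')$, which is precisely GI-hardness of pedigree isomorphism.

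The main obstacle has, in effect, already been cleared: it is the equivalence ``$G \cong G'$ iff $\mathcal{P}(G) \cong \mathcal{P}(G')$'' of Proposition~\ref{prop:GIhard}, whose forward direction is a direct verification and whose reverse direction rests on the observation that any pedigree isomorphism must respect the $I_1, I_2, I_3$ strata (it preserves sources and sinks) and must carry a maximal mutually-mating block of $I_2$ corresponding to a vertex of $V_2$ to another such block. What remains for this theorem is only the routine bookkeeping above — equalizing isolated vertices and confirming the merely quadratic size bound on $I_3$ — neither of which presents any real difficulty.
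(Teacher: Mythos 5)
Your proposal is correct and follows essentially the same route as the paper: reduce from bipartite graph isomorphism (GI-hard by~\cite{Uehara2005}), dispose of isolated vertices by counting and deleting them, and then invoke Proposition~\ref{prop:GIhard} for the equivalence $G \cong G'$ iff $\mathcal{P}(G) \cong \mathcal{P}(G')$. Your explicit polynomial-size check on the construction is a harmless addition the paper leaves implicit.
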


\subsubsection{Sub-Pedigree Isomorphism is NP-Hard}
The reduction given in the previous section is easily adapted to show that sub-pedigree isomorphism is as hard as bipartite sub-graph isomorphism. Since bipartite sub-graph isomorphism is NP-hard by a trivial reduction from bipartite Hamiltonian cycle~\cite{Akiyama1980}, this gives us the following result.

\begin{theorem}
\label{thm:subpedigreehard}
The sub-pedigree isomorphism problem is NP-Hard.
\end{theorem}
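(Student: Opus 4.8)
The plan is to reuse the pedigree‑building gadget $\mathcal{P}(\cdot)$ from the reduction in Section~\ref{subsec:isomorphismhardness}, turning it into a reduction from \emph{bipartite subgraph isomorphism} (given bipartite graphs $G$ and $H$, is $G$ isomorphic to a subgraph of $H$?), which is NP‑hard. Applying $\mathcal{P}(\cdot)$ to both inputs, I would aim to prove: $G$ is isomorphic to a subgraph of $H$ if and only if $\mathcal{P}(G)$ is isomorphic to a sub‑pedigree of $\mathcal{P}(H)$. The two directions mirror those of Proposition~\ref{prop:GIhard}, but each needs an extra ingredient because a sub‑pedigree is an \emph{induced} subgraph whereas a subgraph embedding need not be.

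For ($\Rightarrow$), given an injective subgraph embedding $\varphi = \varphi_1 \sqcup \varphi_2$ of $G$ into $H$, I would define $\phi$ on $I(\mathcal{P}(G))$ by the same formulas as in the forward direction of Proposition~\ref{prop:GIhard}: $\phi(u^s) = \varphi(u)^s$, $\phi((u,v)^s) = (\varphi(u),\varphi(v))^s$, and $\phi((u,u',v)^f) = (\varphi(u),\varphi(u'),\varphi(v))^f$. Injectivity of $\varphi$ makes $\phi$ injective, and $\phi$ carries edges of $\mathcal{P}(G)$ to edges of $\mathcal{P}(H)$ exactly as before. The new point is that the subgraph of $\mathcal{P}(H)$ \emph{induced} by the image of $\phi$ must have no edge beyond the images of edges of $\mathcal{P}(G)$; a short case analysis handles this — the only candidates are edges from an $I_2$-image to an $I_3$-image, and these are ruled out precisely by the injectivity of $\varphi$ (no two vertices of $V_2(G)$ are identified). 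So $\phi$ realizes $\mathcal{P}(G)$ as a sub‑pedigree of $\mathcal{P}(H)$.

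For ($\Leftarrow$), suppose $\phi$ is a pedigree isomorphism from $\mathcal{P}(G)$ onto a sub‑pedigree $\mathcal{S} \subseteq \mathcal{P}(H)$. The argument in Proposition~\ref{prop:GIhard} used that $\phi$ preserves sources and sinks, which no longer works: $\mathcal{S}$ is an induced subgraph of $\mathcal{P}(H)$, and deleting vertices can create new sources and sinks. Instead I would use that the gadget has directed depth exactly two ($I_1 \to I_2 \to I_3$): a vertex of $\mathcal{S}$ of in‑degree $0$ that starts a directed path of length two must lie in $I_1(\mathcal{P}(H))$ (an $I_2$- or $I_3$-vertex can never start such a path, regardless of what is deleted), a vertex of out‑degree $0$ ending such a path must lie in $I_3(\mathcal{P}(H))$, and the remaining vertices of $\mathcal{S}$ lie in $I_2(\mathcal{P}(H))$. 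With the three levels respected, the rest follows Proposition~\ref{prop:GIhard}: $\phi|_{I_1}$ yields $\varphi_1 : V_1(G) \to V_1(H)$ (the two parents of an $I_2$-vertex pin down one $V_1$-vertex), the mate‑connected block of $I_2$-vertices over a fixed $v \in V_2(G)$ is sent into the analogous block over a single $v' \in V_2(H)$, defining $\varphi_2$, and $\phi$'s action on $I_1 \to I_2$ edges shows $\varphi_1 \sqcup \varphi_2$ maps edges of $G$ to edges of $H$.

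The step I expect to be the main obstacle is making $\varphi_1 \sqcup \varphi_2$ genuinely injective, i.e.\ a subgraph embedding and not merely a homomorphism. Injectivity of $\varphi_1$ is immediate from injectivity of $\phi$ on $I_1$, but injectivity of $\varphi_2$ is not automatic, and because sub‑pedigrees are induced subgraphs one can in general obtain from a sub‑pedigree of $\mathcal{P}(H)$ only a non‑injective homomorphism $G \to H$. I would fix this by choosing the source instances so that $\varphi_2$ must be injective — concretely, by taking $G = K_{k,k}$, so that every two vertices of $V_2(G)$ share a common $V_1(G)$-neighbor and therefore cannot be collapsed by $\phi$. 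The reduction is then from \emph{balanced complete bipartite subgraph} (does $H$ contain $K_{k,k}$?), which is NP‑hard, and with this choice the biconditional goes through cleanly in both directions, proving Theorem~\ref{thm:subpedigreehard}.
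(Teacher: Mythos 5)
Your proposal is correct and reuses the paper's gadget $\mathcal{P}(\cdot)$, but it takes a genuinely different route in the choice of source problem, and the difference matters. The paper's entire argument is the assertion that the GI-reduction is ``easily adapted'' to reduce from bipartite subgraph isomorphism (NP-hard via bipartite Hamiltonian cycle), with no details; in particular it never confronts the point you isolate, namely that a sub-pedigree is an \emph{induced} subgraph. Your worry is well founded: in the backward direction the naive adaptation only yields a homomorphism $G \rightarrow H$ that is injective on $V_1$, and $V_2$-vertices of $G$ with disjoint neighborhoods really can be collapsed onto a single vertex of $V_2(H)$ --- the unwanted cross-family matings in $\mathcal{P}(H)$ are harmless because the corresponding $I_3$ children simply fail to lie in the induced vertex set --- so, for example, $\mathcal{P}(C_8)$ can appear as a sub-pedigree of $\mathcal{P}(H)$ for a bipartite $H$ with no Hamiltonian cycle. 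Your fix, reducing instead from balanced complete bipartite subgraph with $G = K_{k,k}$ so that any two $V_2(G)$-vertices share a $V_1(G)$-neighbor and therefore cannot be identified without violating injectivity of $\phi$ on $I_2$, closes this hole, and your remaining steps (recovering the three levels via directed paths of length two, reading $\varphi_1$ off parent couples and $\varphi_2$ off mate-connected $I_2$-blocks, invoking NP-hardness of balanced biclique) are sound. One phrasing correction for your forward direction: the candidate induced $I_2$-to-$I_3$ edges are not ``ruled out''; rather, injectivity of $\varphi$ lets every such induced edge be decoded as the image of an edge of $\mathcal{P}(G)$, which is what makes the image an induced copy. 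In short, the paper's sketch buys brevity and a classical source problem, while your version buys an argument that actually survives the induced-subgraph semantics of sub-pedigrees, at the modest cost of citing the NP-hardness of balanced complete bipartite subgraph.
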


Notice that Theorem~\ref{thm:subpedigreehard} already implies the following corollary about the hardness of the pedigree edit distance problem.
\begin{cor}
The pedigree edit distance problem is NP-hard.
\end{cor}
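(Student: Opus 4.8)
The plan is to derive this corollary directly from Theorem~\ref{thm:subpedigreehard}, exploiting the fact established earlier in the excerpt that a pedigree isomorphism has match distance $0$ (and hence edit distance $0$), and that an efficient algorithm for edit distance would let us decide sub-pedigree isomorphism. Concretely, I would argue by contradiction: suppose there were a polynomial-time algorithm solving the pedigree edit distance problem. I will use it to solve the sub-pedigree isomorphism problem in polynomial time, contradicting Theorem~\ref{thm:subpedigreehard}.

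First I would recall the reduction from bipartite sub-graph isomorphism used for Theorem~\ref{thm:subpedigreehard}: given pedigrees $\mathcal{P}$ and $\mathcal{P'}$ with $X = \emptyset$, the question is whether $\mathcal{P}$ is isomorphic to some sub-pedigree of $\mathcal{P'}$. The key observation is that $\mathcal{P}$ is isomorphic to a sub-pedigree of $\mathcal{P'}$ if and only if there is a pedigree matching $M$ of $\mathcal{P}$ into $\mathcal{P'}$ that well-matches \emph{all} of $E(P)$, i.e.\ with $|W_M| = |E(P)|$ and $d_P(M) = 0$. Indeed, such an $M$ is injective and gender-preserving on the vertices touched by edges, induces an isomorphism between the subgraph of $P$ on $W_M = E(P)$ and the subgraph of $P'$ on $M(W_M)$, and extends to an isomorphism onto the induced sub-pedigree $\mathcal{P'}|_{M(I(P))}$; conversely, an isomorphism onto a sub-pedigree yields such a matching.

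The remaining step is to convert "$d_P(M) = 0$ for some matching" into a statement the edit distance algorithm can answer. The subtlety is that edit distance minimizes $d(M) = d_P(M) + d_{P'}(M)$, and $d_{P'}(M)$ can be large even when $d_P(M) = 0$, because $P'$ may have many edges outside $M(W_M)$. So I would not call the edit distance algorithm on $\mathcal{P}$ and $\mathcal{P'}$ directly. Instead I would pad: construct $\mathcal{P''}$ from $\mathcal{P}$ by adding to $\mathcal{P}$ a disjoint copy of $\mathcal{P'}$ (so $\mathcal{P''}$ contains both $P$'s edges and a fresh copy of all of $P'$'s edges). Then, using Proposition~\ref{prop:dualToEditDistance}, $D_{\mathcal{P''},\mathcal{P'}}$ equals $|E(P'')| + |E(P')| - 2\max_M |W_M|$; the copy of $P'$ inside $\mathcal{P''}$ can always be matched to $\mathcal{P'}$ perfectly, so $\max_M|W_M| \ge |E(P')|$ always, and $\max_M|W_M| = |E(P')| + |E(P)|$ if and only if additionally $\mathcal{P}$ embeds as a sub-pedigree of $\mathcal{P'}$ using the remaining vertices of $\mathcal{P'}$ --- which, since $\mathcal{P'}$ was fully available, is equivalent to sub-pedigree isomorphism of $\mathcal{P}$ into $\mathcal{P'}$. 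Hence $D_{\mathcal{P''},\mathcal{P'}} = |E(P'')| - |E(P')| - 2|E(P)| = 0$ iff the sub-pedigree isomorphism instance is a yes-instance, and the hypothetical polynomial algorithm decides it.

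The main obstacle I anticipate is exactly this bookkeeping: ensuring that in the padded instance no "cross-talk" occurs --- i.e.\ that an optimal matching of $\mathcal{P''}$ into $\mathcal{P'}$ cannot do better by mixing vertices of the embedded $\mathcal{P}$-part with vertices used for the $\mathcal{P'}$-copy --- and handling the gender labels and the empty-label condition $X = \emptyset$ so that matchings in the padded instance correspond exactly to matchings/embeddings in the original. Once that correspondence is nailed down, the reduction is clearly polynomial-time and the corollary follows. (Alternatively, and more simply, one can note that a $0$-edit-distance decision is immediate from the isomorphism structure, so it suffices to observe that deciding whether $D_{\mathcal{P},\mathcal{P'}}=0$ after the padding above is a special case the edit distance problem must handle; this avoids any approximation subtleties and needs only Proposition~\ref{prop:dualToEditDistance} together with Theorem~\ref{thm:subpedigreehard}.)
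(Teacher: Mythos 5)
Your overall strategy---reduce from sub-pedigree isomorphism (Theorem~\ref{thm:subpedigreehard}) via a hypothetical edit-distance oracle---is the same as the paper's, but your padding step contains a genuine error. In the padded instance, $\mathcal{P''}$ (a disjoint union of $\mathcal{P}$ and a copy of $\mathcal{P'}$) versus $\mathcal{P'}$, any matching $M$ is an injective map into $I(P')$, so the induced map on well-matched edges is injective into $E(P')$, and hence $|W_M| \le |E(P')|$ for \emph{every} $M$. Your claimed optimum $\max_M |W_M| = |E(P')| + |E(P)|$ is therefore unattainable whenever $E(P) \neq \emptyset$: the copy of $P'$ and the embedded copy of $P$ compete for the same vertices of $P'$, so they cannot both be matched perfectly. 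The arithmetic signals the same problem: with $|E(P'')| = |E(P)| + |E(P')|$, your target value $|E(P'')| - |E(P')| - 2|E(P)|$ equals $-|E(P)|$, not $0$, and a negative edit distance is impossible. So the padded reduction as written does not decide sub-pedigree isomorphism.

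The padding is also unnecessary, and this is where you diverge from the paper. Your premise is that the oracle only helps if yes-instances can be forced to have distance $0$; but the edit distance problem returns the distance itself, so one may compare it to any computable threshold. Since $|W_M| \le |E(P)|$ for every matching, $D_{\mathcal{P},\mathcal{P'}} \ge |E(P')| - |E(P)|$ always (taking $|E(P)| \le |E(P')|$), with equality precisely when some matching well-matches every edge of $P$, i.e.\ precisely when $\mathcal{P}$ maps isomorphically onto a sub-pedigree of $\mathcal{P'}$. That one-line threshold test is exactly the paper's proof of the corollary (the paper states the threshold as $|E(P)| - |E(P')|$, a sign slip). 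Your first paragraph, establishing the correspondence between sub-pedigree embeddings and matchings with $d_P(M)=0$, is the right ingredient (both you and the paper gloss over the induced-subgraph subtlety, e.g.\ a founder of $P$ mapping onto a non-founder of $P'$); what is missing is simply the observation that $d_{P'}(M)$ is bounded below by $|E(P')| - |E(P)|$ and that this bound is met exactly on yes-instances, rather than any attempt to cancel $d_{P'}(M)$ by padding.
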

\begin{proof} By reduction from sub-pedigree isomorphism: $\mathcal{P} = (P,s)$ is a sub-pedigree of $\mathcal{P'} = (P',s')$ if and only if the edit distance between them is exactly $|E(P)| -|E(P')|$.
\end{proof}

In the next section, we will improve this result by showing that the pedigree edit distance problem is hard even to \em approximate \em in polynomial time.


\subsection{Pedigree Edit Distance is APX-Hard}
\label{subsec:editdistancehardness}

Our results about the hardness of sub-pedigree isomorphism implied that pedigree edit distance is NP-hard. We now strengthen this result, showing that pedigree edit distance is APX-hard.

\begin{remark}
There is extensive literature on the hardness/tractability of the more general problem of inexact graph matching~\cite{Conte2004, Zhang1994}. However, these hardness results do not apply to our edit distance---perhaps the hard cases of inexact graph matching are non-pedigrees.
\end{remark}

Our reduction will actually establish the hardness of the following problem:

\paragraph{The Minimum Cut/Paste Distance between Trees (MCPDT) Problem:} Given two directed rooted
unlabeled trees $T_1,T_2$, and a natural number $k$, can $T_1$ be converted into
$T_2$ using $k$ edge additions/deletions?

Showing that MCPDT is hard suffices to establish the hardness of the pedigree edit-distance problem because an arbitrary unlabeled tree can be trivially turned into an unlabeled monogamous pedigree: consider all nodes of the tree to be female and add a founding male mate to each non-leaf in the tree. This transformation doubles the cut/paste distance between trees because it exactly doubles the number of edges in each tree.

\begin{remark}
Notice that the cut/paste distance between trees is different from the subtree-prune-and-regraft (rSPR) operation for binary phylogenetic trees, since rSPR involves maintaining the binary property of a phylogenetic tree~\cite{Wu2009} and the leaves of a phylogenetic tree are labeled.  (In contrast, here we have unlabeled trees that are not binary.)
\end{remark}

To show that MCPDT is hard, we reduce from Minimum Common Integer Partition.

A \em partition \em of a positive integer $n$ is a multiset of positive integers that add up to exactly $n$. For example, $\{3,2,2,1\}$ is a partition of $8$. A \em partition of a multiset \em $S$ is a multiset union of partitions of integers in $S$. A multiset $X$ is a common partition of two multisets $S_1, S_2$ if it is an integer partition of both $S_1$ and $S_2$. For example, given $S_1=\{8,5\},S_2=\{9,4\}$, $X=\{5,3,2,2,1\}$ is a common partition of $S_1,S_2$.

\paragraph{The Minimum Common Integer Partition (MCIP) Problem:} Given two multisets of integers $S_1,S_2$ find the common integer partition of $S_1$ and $S_2$ with minimum cardinality.

Our result will rely on the following fact, proven in~\cite{CLLJ08}.
\begin{fact}
MCIP is APX-hard.
\end{fact}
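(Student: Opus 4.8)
The plan is to give an L-reduction to $\MCIP$, restricted to two multisets, from a bounded-occurrence APX-hard packing problem; a natural choice is maximum $3$-dimensional matching in which each element occurs in at most a fixed constant number of triples (MAX-3DM-B), which is known to be APX-hard. The first step is a reformulation of $\MCIP$ that exposes its combinatorial content: if $S_1$ and $S_2$ have the same sum (otherwise no common partition exists), then a minimum common partition has exactly $|S_1|+|S_2|-c(S_1,S_2)$ elements, where $c(S_1,S_2)$ is the largest $c$ for which $S_1$ and $S_2$ can be split \emph{simultaneously} into $c$ sub-multiset pairs of equal sum. One direction: a common partition induces a bipartite ``co-occurrence'' graph on $S_1\uplus S_2$ whose components are equal-sum blocks and which has at least $|S_1|+|S_2|$ minus its number of components many edges, each edge costing at least one part. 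The other direction: split along a maximum block decomposition and greedily common-partition each block, using at most $|S_1^{(j)}|+|S_2^{(j)}|-1$ parts per block. So minimizing $|X|$ is precisely maximizing the number of equal-sum blocks --- the two objectives are affinely, and oppositely, related, which is exactly the setting an L-reduction is built to handle.

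First I would fix a \emph{rigid} integer encoding of a MAX-3DM-B instance (sets $A,B,C$ of size $n$, a family of $q$ triples, constant occurrence bound) into multisets $S_1,S_2$ of equal sum with all integers polynomially bounded. The design goal: a ``ground'' block decomposition into a fixed number $\beta_0$ of equal-sum blocks is forced by the encoding, and selecting a triple $t=(a,b,c)$ in the matching corresponds to carving \emph{one extra} equal-sum block out of the integers associated with $a,b,c,t$; two triples sharing an element cannot both be carved, because the shared element's integer lies in only one block. A matching of size $k$ then gives $c = \beta_0 + k$, hence $\opt_{\MCIP} = (|S_1|+|S_2|) - \beta_0 - \opt_{\text{3DM}}$. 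Concretely I would use a base-$\Theta(q)$ positional representation: a private digit per element of $A\cup B\cup C$, triple-integers carrying ones in the digits of the three elements they cover, single-element integers plus dummy/slack integers filling out $S_2$, and extra high-order digits letting unselected triples and uncovered elements sit inside the ground decomposition. Choosing the base large enough that no carries occur forces ``equal sum'' to mean ``equal digit pattern'', which is what rules out accidental blocks.

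Then I would verify the two L-reduction conditions. After the standard simplifications of MAX-3DM-B (constant occurrence bound, every element in some triple) one has $\opt_{\text{3DM}} = \Theta(q) = \Theta(n)$ while $|S_1|+|S_2| = \Theta(n+q)$, so $\opt_{\MCIP} = \Theta(\opt_{\text{3DM}})$. For the solution-transfer bound, by the reformulation any common partition may be assumed to arise from a block decomposition; one ``uncrosses'' its blocks into the ground blocks plus carved triple-blocks, reading off a $3$-matching whose size falls short of $\opt_{\text{3DM}}$ by at most the amount by which $|X|$ exceeds $\opt_{\MCIP}$. Hence an approximately minimum common partition yields an approximately maximum $3$-matching with only constant-factor loss, so APX-hardness passes from MAX-3DM-B to $\MCIP$ --- and, via the tree-to-pedigree transformation already described, onward to $\MCPDT$ and to pedigree edit distance.

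The hard part is the encoding. It must be rigid: no sub-multiset of $S_1$ may share its sum with an unintended sub-multiset of $S_2$, since any such coincidence creates a spurious block, inflates $c$, and breaks the correspondence --- yet every integer must stay polynomial in $n+q$. The positional-base trick kills interference between distinct triples, but one still has to forbid a triple-integer from being usefully split inside a block and forbid the dummy/slack integers from recombining with the single-element integers; pinning down exactly which magnitudes must dominate which, and giving the short counting/parity argument that no other block structure can survive, is where the real work lies. The rest is bookkeeping.
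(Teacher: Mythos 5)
First, a point of calibration: the paper does not actually prove this statement---it is imported as a Fact with a citation to~\cite{CLLJ08}, and the proof given there is itself an $L$-reduction from bounded-occurrence maximum 3-dimensional matching. So your overall plan coincides with the strategy of the \emph{cited} proof rather than with anything argued in this paper. Your reformulation of $\MCIP$ is correct and is the right lever: for equal-sum $S_1,S_2$, a minimum common partition has size $|S_1|+|S_2|-c(S_1,S_2)$, where $c$ is the maximum number of simultaneous equal-sum blocks (the co-occurrence-graph argument for the lower bound and the $a_j+b_j-1$ greedy refinement per block for the upper bound are both sound).

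That said, as a proof your proposal has a genuine gap, and you have named it yourself: the rigid encoding is never constructed, and it is not ``bookkeeping''---it is the entire content of the theorem. Everything rests on the claims that (i) a ground decomposition into exactly $\beta_0$ blocks is forced, (ii) an extra block can be carved exactly when a triple is selected and is blocked when two triples share an element, and (iii) no unintended equal-sum coincidences exist, so that from an \emph{arbitrary} near-optimal common partition one can uncross to the intended block structure and read off a $3$-matching losing at most $|X|-\opt(\MCIP)$. Point (iii) is the delicate one: an approximate common partition is under no obligation to respect your digit structure; triple-integers can be split and their fragments recombined across blocks with dummy/slack integers, and excluding this while keeping all integers polynomially bounded (so that $\opt(\MCIP)=\Theta(\opt(\mathrm{3DM}))$ and both $L$-reduction inequalities hold with constant $\alpha,\beta$) is precisely the technical work of the proof in~\cite{CLLJ08}. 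Until the gadget is written down and the uncrossing/counting argument is carried out, the identity $\opt(\MCIP)=|S_1|+|S_2|-\beta_0-\opt(\mathrm{3DM})$ and the solution-transfer bound are unsubstantiated. In short: right source problem, correct reformulation lemma, essentially the same route as the cited literature, but not yet a proof.
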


We now reduce MCIP (Minimum Common Integer Partition) to MCPDT (Minimum
Cut/Paste Distance between Trees) with an $L$-reduction~\cite{Papadimitriou1988}.

\paragraph{Reduction:} Given $S_1=\{n_1,n_2,...,n_p\}, S_2=\{m_1,m_2,...,m_q\}$, we construct two trees $T_1,T_2$ with roots $r_1,r_2$ such that for each $1\leq i\leq p$ (resp. $1\leq j\leq q$) there is a distinct path from $r_1$ (resp. $r_2$) of length $n_i$ (resp. $m_j$). For example see Figure~\ref{fig1}. This means that we need to cut $T_1,T_2$ into a common forest in which each tree except the ones containing $r_1,r_2$, is a path.

\begin{figure}[htbp]
\centering
\includegraphics[width=0.35\textwidth]{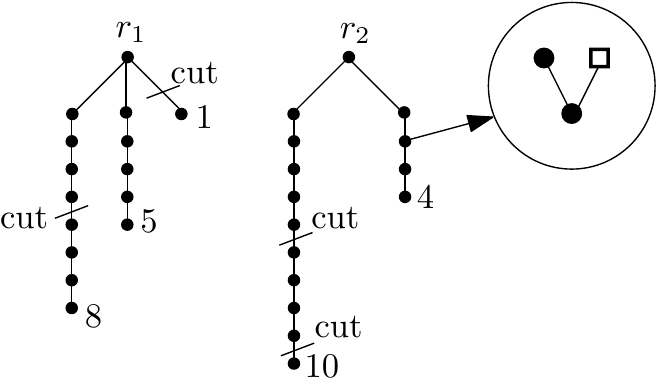}
\begin{center}
\caption{An example for the reduction.  Let $S_1 = \{8,5,1\}$ and $S_2=\{10,4\}$.  The optimal number of cut/paste operations is $2$.}
\label{fig1}
\end{center}
\end{figure}

\begin{proposition}
MCPDT is APX-hard.
\end{proposition}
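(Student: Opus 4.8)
The plan is to complete the $L$-reduction from MCIP to MCPDT that has just been set up, thereby showing MCPDT is APX-hard; combined with the earlier observation that an unlabeled tree embeds into a monogamous pedigree while exactly doubling edge counts (hence doubling cut/paste distance), this yields APX-hardness of the pedigree edit distance problem. First I would pin down the correspondence between common integer partitions of $S_1, S_2$ and edit sequences between $T_1$ and $T_2$. The key structural claim is: an optimal edit path from $T_1$ to $T_2$ first cuts both trees down to a common forest $F$ consisting of the two ``root stubs'' (the portions still attached to $r_1$, $r_2$) together with a collection of bare paths, and then pastes these pieces back together to form $T_2$. Each bare path of length $t$ in the common forest corresponds to a part of size $t$ in a common integer partition $X$ of $S_1$ and $S_2$: cutting the length-$n_i$ path hanging off $r_1$ into pieces of sizes summing to $n_i$ realizes a partition of $n_i$, and symmetrically for $m_j$. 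I would argue that we may assume WLOG that the root stubs are empty (equivalently, that $r_1$ and $r_2$ end up as isolated roots in the common forest), since keeping any edge attached to a root only helps on one side and can be matched by a slightly different choice of partition without increasing cost — this needs a short exchange argument.

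Next I would count edges precisely. If $N = \sum_i n_i = \sum_j m_j$ (we may assume the two multisets have equal sum, else no common partition exists — or pad to make the reduction well-defined), then $T_1$ has $N$ edges, $T_2$ has $N$ edges, and a common partition $X$ with $|X|$ parts, viewed as a forest of $|X|$ paths, has $N - |X|$ edges (each path of length $t$ uses $t-1$ internal edges... — here I must be careful about whether ``length'' counts edges or vertices; with length counting edges, a part of size $t$ is a path with $t$ edges and the forest has $N$ edges total and $|X|$ components, so cutting it out of $T_1$ means deleting the $p$ edges joining the $p$ path-tops to $r_1$ plus the edges internal to the partition refinement). The clean way to say it: to go from $T_1$ to the forest of $|X|$ bare paths realizing $X$ we delete exactly $|X| - p$ edges (one extra cut per new part beyond the original $p$ paths, plus the $p$ edges down from $r_1$... ) — I will compute this constant exactly so that it comes out as an affine function of $|X|$. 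Symmetrically the paste phase into $T_2$ costs $|X| - q$ (plus the $q$ root edges). The upshot should be that the optimal MCPDT cost equals $2|X^*| + c$ where $X^*$ is a minimum common partition and $c$ depends only on $p,q$ (or is simply $2|X^*| - p - q$ or similar); the exact constant doesn't matter for an $L$-reduction as long as it is linear in $|X|$ and the MCIP optimum is $\Omega$ of the instance size, which holds since $|X^*| \geq \max(p,q)$.

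To conclude the $L$-reduction I would verify the two standard conditions: (1) $\opt_{\MCPDT}(T_1,T_2) \leq \alpha \cdot \opt_{\MCIP}(S_1,S_2)$ for a constant $\alpha$ — immediate from the exact formula above together with $\opt_{\MCIP} \geq \max(p,q)$ bounding the additive constant; and (2) from any edit sequence of length $\ell$ one recovers in polynomial time a common partition $X$ with $\big| |X| - |X^*| \big| \leq \beta \cdot \big(\ell - \opt_{\MCPDT}\big)$ for a constant $\beta$ — this follows because any edit sequence can be normalized (without increasing length) into the cut-then-paste form, and then its length minus the optimum equals $2(|X| - |X^*|)$. Finally, chaining with the tree-to-pedigree gadget (which multiplies all distances by exactly $2$) preserves $L$-reducibility, and APX-hardness of MCIP (the cited Fact) transfers to MCPDT and then to pedigree edit distance. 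I expect the main obstacle to be step one: rigorously showing that an \emph{arbitrary} optimal edit sequence between $T_1$ and $T_2$ can be rearranged into the canonical ``cut everything into a common forest of paths, then reassemble'' form without increasing its length — in particular ruling out clever edit paths that exploit the root stubs or that route through intermediate non-path trees to save operations. This is where a careful normal-form / exchange argument is needed, and it is the crux of correctness of the reduction.
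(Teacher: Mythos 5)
Your construction and overall plan (an $L$-reduction from MCIP via the two ``spider'' trees) are the same as the paper's; the difference is that you try to actually derive the quantitative correspondence between common partitions and edit cost, and that derivation has a genuine gap. The proposed WLOG that the root stubs are empty is false, and it is not a harmless normalization: keeping one part of the partition attached to the root on each leg saves one cut \emph{and} one paste, so the cost of realizing a common partition $X$ is not $2|X|$, nor $2|X|+c(p,q)$ for any constant depending only on $p,q$, but $2\bigl(|X|-a(X)\bigr)$, where $a(X)$ is the maximum number of parts that can simultaneously serve as top segments of distinct legs of $T_1$ and of distinct legs of $T_2$ --- i.e.\ a maximum matching in the bipartite multigraph whose edges are the parts of $X$, each joining the element of $S_1$ and the element of $S_2$ it refines. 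This quantity depends on $X$, not only on $p$ and $q$. Concretely, take $S_1=\{1,1,6\}$ and $S_2=\{2,3,3\}$: the only minimum common partition is $\{1,1,3,3\}$, so $\opt(\MCIP)=4$ and $\min\{p,q\}=3$; but both size-$1$ parts refine the single element $2\in S_2$, so $a=2$ and realizing this partition costs $2(4-2)=4$ edge operations, while no edit path of length $2$ exists at all (one deletion plus one insertion would force $T_1$ minus an edge to be isomorphic to $T_2$ minus an edge, and comparing leg multisets at the unique degree-$3$ vertex rules this out). Your hoped-for formulas $2(|X^*|-\min\{p,q\})$ or $2|X^*|-p-q$ predict $2$. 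Worse, the true optimum $4$ is also attained via the \emph{non-minimum} partition $\{1,1,2,2,2\}$ (which has $a=3$), so reading a partition off an optimal edit path can return a partition strictly larger than $\opt(\MCIP)$; this kills the identity ``length minus optimum $=2(|X|-|X^*|)$'' on which your condition (2) rests. Note also that your intuition that an attached piece ``only helps on one side'' is backwards: it helps on both sides precisely when the attachments can be made consistent, and whether they can is the whole difficulty.

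For comparison, the paper's own proof does not resolve this either: it simply asserts that the MCPDT optimum equals $\opt(\MCIP)-\min\{p,q\}$ (implicitly counting a cut together with its compensating paste as a single operation, which already disagrees by a factor of $2$ with the edge-additions/deletions definition of MCPDT), and that is exactly the correspondence the example above contradicts. So you have correctly located the crux that the paper glosses over, but your plan as written cannot close it: a correct argument would have to work with $\min_X 2\bigl(|X|-a(X)\bigr)$ and re-derive the approximation-preservation inequalities from that (or modify the reduction, e.g.\ with gadgets pinning the roots so that the attachment bonus becomes instance-independent), rather than from an affine function of $|X|$ alone.
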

\begin{proof}
We prove that the reduction given above is an $L$-reduction.  Let $\opt(\MCIP)$ and $A(\MCIP)$ be the values of the optimal and approximate solutions, respectively, of our instance of MCIP. Let $\min=\min\{p,q\}$. Then the value of the optimal solution for our instance of MCPDT is $\opt=\opt(\MCIP)-\min$. In other words, we can conclude that $S_1,S_2$ has common integer partition of size $k$ if and only if $T_1,T_2$ each can be cut into a common forest with $k-\min$ cuts.

Moreover, $A(\MCPDT)=A(\MCIP)-\min$ is the value of a feasible solution of MCPDT. Clearly we have
\begin{enumerate}
\item $\opt\leq \alpha\cdot \opt(\MCIP)$, by setting $\alpha=1$.
\item $|\opt(\MCIP)-A(\MCIP)|\leq\beta\cdot|\opt-A(\MCPDT)|$, by setting $\beta=1$.
\end{enumerate}
To see the second claim, notice that
\begin{align*}
|\opt-A(\MCPDT)| 	&= |\opt-(A(\MCIP)-\min)| \\
					&= |\opt(\MCIP)-A(\MCIP)| \\
\end{align*}
Therefore, this reduction is an $L$-reduction. As MCIP is APX-hard, MCPDT is also APX-hard. 
\end{proof}

This result implies that unless P=NP, there is no PTAS for MCPDT, and so the pedigree edit distance problem is APX-hard in general.
\begin{theorem}
The pedigree edit distance problem is APX-hard.
\end{theorem}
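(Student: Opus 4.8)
The plan is to reduce the (APX-hard, by the proposition above) problem $\MCPDT$ to the pedigree edit distance problem via an $L$-reduction; since $L$-reductions compose and APX-hardness is inherited along them, composing with the $\MCIP\to\MCPDT$ reduction already established yields the theorem. The reduction is the one sketched informally before the statement: given directed rooted unlabeled trees $T_1,T_2$, form pedigrees $\mathcal{P}(T_1),\mathcal{P}(T_2)$ (with $X=\emptyset$) by declaring every tree vertex female, attaching a fresh founding male $m_u$ to every non-leaf vertex $u$, and making $m_u$ the co-parent of all of $u$'s children. The first thing I would do is the routine verification that $\mathcal{P}(T_i)$ is a legal pedigree: it is acyclic, every vertex has in-degree $0$ or $2$, the two parents of any child have opposite genders (one is a former tree vertex, the other an $m_u$), the pedigree is monogamous, and $|E(\mathcal{P}(T_i))| = 2|E(T_i)|$ because each tree edge $(u,v)$ becomes exactly the two pedigree edges $(u,v)$ and $(m_u,v)$.

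The substance of the proof is the identity $D_{\mathcal{P}(T_1),\mathcal{P}(T_2)} = 2\,\mathrm{cp}(T_1,T_2)$, where $\mathrm{cp}$ denotes cut/paste distance. By Proposition~\ref{prop:dualToEditDistance} and its evident analogue for directed rooted unlabeled trees, this is equivalent to $w^{*}_{\mathrm{ped}} = 2\,w^{*}_{\mathrm{tree}}$, where $w^{*}$ denotes the maximum number of well-matched edges over all matchings (of pedigrees, resp.\ of trees). For the lower bound $w^{*}_{\mathrm{ped}}\ge 2\,w^{*}_{\mathrm{tree}}$ I would take an optimal tree matching $\mu$ and lift it: keep $\mu$ on the female vertices and send $m_u\mapsto m_{\mu(u)}$ whenever $\mu(u)$ is a non-leaf (otherwise send $m_u$ to an isolated male, padding $\mathcal{P}(T_2)$ with isolated vertices if needed, which is free); then each well-matched tree edge $(u,v)$ contributes the two well-matched pedigree edges $(u,v)$ and $(m_u,v)$. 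For the upper bound, note that any pedigree matching $M$ preserves gender, so its restriction to the female vertices is an injective tree-vertex map $\mu$; a well-matched pedigree edge with both endpoints female is precisely a well-matched tree edge of $\mu$, while a well-matched edge of the form $(m_u,v)$ requires $(u,v)\in E(T_1)$ and forces $M(m_u)=m_w$ with $w$ the $T_2$-parent of $\mu(v)$. What remains is to normalize $M$, without decreasing $|W_M|$, so that it sends every mated couple $(u,m_u)$ to a mated couple; once that holds the well-matched pedigree edges come in ``vertical pairs'' lying over the well-matched tree edges of $\mu$, giving $|W_M| = 2|W_\mu| \le 2\,w^{*}_{\mathrm{tree}}$.

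This normalization is the step I expect to be the main obstacle: one must rule out a matching that exploits the male founders to well-match many $(m_u,v)$ edges whose parallel female edges $(u,v)$ are \emph{not} well-matched. The rigidity of $\mathcal{P}(T_i)$ is what should make it go through --- each added founder $m_u$ is the co-parent of exactly one couple and plays no other role, so re-pointing $M$ on the males to follow $M$ on the females (or conversely) can only help, modulo a finite injectivity-conflict bookkeeping argument; this is exactly the ``doubling'' asserted informally in the surrounding text. Granting the identity, the map $(T_1,T_2)\mapsto(\mathcal{P}(T_1),\mathcal{P}(T_2))$ is an $L$-reduction with $\alpha=2$ and $\beta=\tfrac12$: a feasible pedigree matching of cost $c$ projects (by the upper-bound argument) to a cut/paste edit sequence of cost at most $c/2$, and $|\opt(\MCPDT)-c/2| = \tfrac12\,|2\opt(\MCPDT)-c| = \tfrac12\,|D_{\mathcal{P}(T_1),\mathcal{P}(T_2)}-c|$. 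Composing this with the $L$-reduction of the previous proposition (from the APX-hard problem $\MCIP$) completes the argument.
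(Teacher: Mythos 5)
Your proposal follows the paper's own route exactly: it composes the $L$-reduction from $\MCIP$ to $\MCPDT$ with the same tree-to-pedigree gadget (declare every tree vertex female and give each non-leaf a founding male mate), so that the pedigree edit distance is twice the cut/paste distance and APX-hardness transfers. The doubling identity you isolate as the crux---in particular the ``normalization'' showing a pedigree matching cannot exploit the male founders to beat twice the best tree matching---is exactly the step the paper itself dispatches in a single sentence (``the transformation doubles the distance because it exactly doubles the number of edges''), so your treatment is if anything more detailed; just note that re-pointing the males to follow the females can strictly \emph{decrease} $|W_M|$, so that step should instead be argued by extracting from any pedigree matching a single tree matching whose well-matched edge count is at least $|W_M|/2$.
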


Certainly for leaf-labeled trees, it is well known that there is a polynomial-time algorithm for computing cut/paste distance (i.e.~on trees).  However, this algorithm does not work for leaf-labeled pedigree graphs.  Next we establish the hardness of the leaf-labeled edit distance.


\subsection{Compatibly Leaf-Labeled Pedigree Edit Distance is NP-Hard}

To prove the hardness of the compatibly leaf-labeled edit distance problem, we will take an instance of the edit distance problem without leaf labels (i.e.~$X=\emptyset$), and create an instance of the compatibly leaf-labeled edit distance problem whose solution allows us to compute the edit distance for the original edit distance instance.

\begin{theorem}
The leaf-labeled pedigree edit distance problem is NP-hard.
\end{theorem}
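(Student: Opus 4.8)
The plan is to reduce from the (unlabeled) pedigree edit distance problem, which we already know is NP-hard (indeed APX-hard) by the previous subsection. Given an instance $(\mathcal{P},\mathcal{P}')$ of the edit distance problem with $X = X' = \emptyset$, I would build compatibly leaf-labeled pedigrees $(\mathcal{Q},\mathcal{Q}')$ by attaching, to \emph{every} individual $v$ of $\mathcal{P}$, a private "gadget" of freshly labeled leaves hanging below $v$, and doing the same to $\mathcal{P}'$ using the same label set. The gadget attached to $v$ should (a) use labels unique to $v$ (so that any matching respecting leaf labels is forced to match $v$'s gadget to the gadget of the same-labeled individual in $\mathcal{P}'$, hence forced to match $v$ to the corresponding individual), and (b) cost the same fixed amount $C$ of edit operations regardless of how the original edge set is edited, so that the optimal edit distance of $(\mathcal{Q},\mathcal{Q}')$ is exactly $D_{\mathcal{P},\mathcal{P}'}$ plus a known additive constant. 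A concrete realization: give each individual $v$ its own distinct integer label $\lambda(v)$ shared between the two pedigrees, and attach to $v$ a child leaf carrying that label (adding a dummy founder mate of the opposite gender so the in-degree-two and distinct-gender conditions are met). One must also ensure the construction is still leaf-labeled, i.e. the old leaves of $\mathcal{P}$ are no longer leaves (they now have gadget children) so they need no label, while the new gadget leaves are exactly the labeled set.

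The key steps, in order, are: (1) describe the gadget precisely and check that $\mathcal{Q},\mathcal{Q}'$ are legal pedigrees (acyclic, in-degree $\in\{0,2\}$, opposite-gender parents) and are compatibly leaf-labeled; (2) show the forward direction: any matching $M$ between $\mathcal{P}$ and $\mathcal{P}'$ extends to a matching $\bar M$ between $\mathcal{Q}$ and $\mathcal{Q}'$ that maps each gadget onto the gadget of the correspondingly-matched individual, with $d(\bar M) = d(M) + C$ where $C$ counts the gadget edges that must be handled for the individuals of $\mathcal{P}$ not matched by $M$ (if $M$ is total this extra term is $0$, and we may assume the optimal matching matches all of $I(P)$ since extra edgeless vertices are free); (3) show the reverse direction: from any matching $N$ between $\mathcal{Q}$ and $\mathcal{Q}'$ achieving $D_{\mathcal{Q},\mathcal{Q}'}$, argue that it may be assumed to send $v$'s gadget into the like-labeled gadget — because the labels force the leaf-to-leaf correspondence and hence (through the unique gadget parent edges) force $v\mapsto \lambda'^{-1}(\lambda(v))$ — so restricting $N$ to $I(P)$ gives a matching between $\mathcal{P}$ and $\mathcal{P}'$ of match distance $D_{\mathcal{Q},\mathcal{Q}'}-C$; (4) conclude $D_{\mathcal{P},\mathcal{P}'} = D_{\mathcal{Q},\mathcal{Q}'} - C$, so an oracle for compatibly leaf-labeled edit distance solves unlabeled edit distance, which is NP-hard.

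The main obstacle is step (3): an optimal matching $N$ of the leaf-labeled instance need not, a priori, be generation-preserving or respect the gadget structure, so I must show that any deviation can be repaired without increasing $d(N)$. The danger is a matching that, say, deletes a gadget edge in $\mathcal{Q}$ and re-uses that "budget" cleverly elsewhere. The fix is to make the gadgets rigid and cheap: each labeled leaf has a unique label, so condition (2) of the definition of pedigree matching \emph{forces} $N$ to map each labeled leaf of $\mathcal{Q}$ to the identically-labeled leaf of $\mathcal{Q}'$; then the single parent-edge into that leaf, together with the requirement that well-matched edges map to edges, pins down where the leaf's parent goes, and an exchange argument shows rerouting $N$ on the (bounded-size, individual-private) gadget to the canonical assignment never decreases the number of well-matched edges. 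Here it is cleanest to take the gadget to be a single labeled child per original individual (plus a dummy founder mate), which keeps the exchange argument short; one then carefully tallies $C$ as the number of gadget/dummy edges and confirms it is the same for $\mathcal{Q}$ and $\mathcal{Q}'$ because $\mathcal{P}$ and $\mathcal{P}'$ have the same individuals after padding with isolated vertices. Finally I would remark that the reduction is polynomial, completing the proof.
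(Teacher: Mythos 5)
There is a genuine gap, and it lies in the gadget design itself; it surfaces already in your step (2), not only in step (3). Giving each individual a \emph{private} label shared between the two pedigrees presupposes an identification of $I(P)$ with $I(P')$ (after padding), call it $\lambda$. Since every gadget label occurs in both pedigrees, condition 2 of the definition of a pedigree matching forces \emph{every} matching of $(\mathcal{Q},\mathcal{Q}')$ to send each labeled leaf to its like-labeled counterpart. Hence the extension $\bar M$ in your step (2), which maps the gadget of $u$ onto the gadget of $M(u)$, is not a legal pedigree matching whenever $M(u)\neq\lambda(u)$; the only legal extensions keep the leaves pinned by $\lambda$, and then every individual $u$ with $M(u)\neq\lambda(u)$ leaves the edge from $u$ to its labeled child outside $W_N$. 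So your construction computes $D_{\mathcal{Q},\mathcal{Q}'}=\min_M\{d(M)+\mathrm{pen}_\lambda(M)\}$ with $\mathrm{pen}_\lambda(M)\geq|\{u:M(u)\neq\lambda(u)\}|$, which is not $D_{\mathcal{P},\mathcal{P}'}$ plus a constant. Concretely, if $\mathcal{P}\cong\mathcal{P}'$ but $\lambda$ agrees with no isomorphism, then $D_{\mathcal{P},\mathcal{P}'}=0$ while $D_{\mathcal{Q},\mathcal{Q}'}>0$ (any matching either deviates from $\lambda$ and pays on gadget edges, or follows $\lambda$ and pays on original edges), and its value depends on the arbitrary choice of $\lambda$. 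Forcing $v\mapsto(\lambda')^{-1}(\lambda(v))$ is precisely what the reduction must avoid: it turns the labeled instance into evaluating one fixed correspondence, a polynomial-time task unrelated to the edit distance.

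What is needed is a gadget whose cost is \emph{invariant} over all matchings of the original individuals, so that it contributes a genuine additive constant, and this is how the paper's proof proceeds: for each $u\in I(P)$ it adds a founder mate $u'$ and one labeled leaf $i_{u,v}$ for \emph{every} $v\in I(P')$, and symmetrically a leaf $j_{v,u}$ under every $v\in I(P')$, with $\ell'(j_{v,u})=\ell(i_{u,v})$. Then each pair $(u,v)$ shares exactly one leaf label, so no matter which $v$ the individual $u$ is matched to, the same number of gadget edges become well-matched; the leaf labels constrain nothing about the matching on the original individuals, and one obtains the matching-independent shift $D_{\mathcal{P},\mathcal{P'}} = D_{\mathcal{Q},\mathcal{Q'}} - 2\left( |I(P)||I(P')| - \min\{|I(P)|,|I(P')|\} \right)$, at the price of a quadratic (still polynomial) blow-up. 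If you replace your single private leaf per individual by this family of $|I(P')|$ leaves per individual (and symmetrically in $\mathcal{Q}'$), your outline (1)--(4) does go through, with your exchange argument needed only for the benign normalization that the dummy mates satisfy $N(u')=N(u)'$.
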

\begin{proof} Given non-labeled pedigrees ${\cal P}=(P,s)$ and ${\cal P}'=(P',s')$, we define compatibly leaf-labeled pedigrees ${\cal Q}=(Q,X,t,\ell)$ and ${\cal Q'}=(Q',X',t',\ell')$ as follows. ${\cal Q}$ is obtained from ${\cal P}$ by adding, for each individual $u \in I(P)$, an individual $u'$ of the opposite gender, and, for each individual $v \in I(P')$, a new individual $i_{u,v}$ which is the child of $u$ and $u'$. ${\cal Q}'$ is obtained from ${\cal P}'$ similarly: for each individual $v \in I(P')$ and $u \in I(P)$, we create an individual $j_{v,u}$ in $I(Q')$ which is the child of $v$ and $v'$, where $v'$ is a founder individual of the opposite gender as $v$, also added to $I(Q')$. Now ${\cal Q}$ and ${\cal Q}'$ have leaf sets $\{i_{u,v}\}$ and $\{j_{v,u}\}$, respectively. Let $\ell$ be defined arbitrarily on $\{i_{u,v}\}$, and let $\ell'(j_{v,u}) = \ell(i_{u,v})$. Then ${\cal Q}$ and ${\cal Q}'$ are compatibly leaf-labeled pedigrees. The following proposition completes the proof of the theorem:

\begin{proposition} 
$D_{\mathcal{P},\mathcal{P'}} = D_{\mathcal{Q},\mathcal{Q'}} - 2 \left( |I(P)||I(P')| - \min\{|I(P)|,|I(P')|\} \right)$
\end{proposition}
\begin{proof}

Without loss of generality, we assume that $|I(P)| \leq |I(P')|$. Suppose that $D_{\mathcal{P},\mathcal{P'}} = d$. Then there is a matching $M$ that achieves this distance and we may assume that the $M$ is defined on all of $I(P)$, because if not we can extend $M$ arbitrarily to all of $I(P)$ without changing the match distance.

$M$ extends uniquely to a matching $N$ defined on all of $I(Q)$ that respects the labels of the added leaf individuals and such that $N(u') = M(u)'$. We now note that, for every individual on which $M$ is defined, $N$ will have exactly one additional well-matched edge. Therefore, $|W_N| = |W_M| + |I(P)|$. We also have $|E(Q)| = |E(P)| + |I(P)||I(P')|$ and $E(Q') = E(P') + |I(P)||I(P')|$. This gives that $D_{\mathcal{Q},\mathcal{Q'}} \leq d + 2|I(P)||I(P')| - 2|I(P)|$ by Definitions~\ref{def:matchdistance} and~\ref{def:editdistance}.

Now suppose that $D_{\mathcal{Q},\mathcal{Q'}} = d$. This means that there is a matching $N$, defined again without loss of generality on all of $I(Q)$, that achieves this edit distance. If $N$ does not take every $u' \in I(Q)$ to $N(u)' \in I(Q')$, we can modify it so that this is the case, since this can only increase the number of well-matched edges of $N$. Once this is established, the same argument used above shows that, if we define $M$ to be the restriction of $N$ to $I(P)$, then $D_{\mathcal{P},\mathcal{P'}} \leq d - 2|I(P)||I(P')| + 2|I(P)|$. This completes the proof of the proposition.
\end{proof}
\end{proof}


\section{Simulation Results}
\label{sec:simulations}
We evaluated the general randomized heuristic (Section~\ref{subsec:randomheuristic}) and the dynamic programming algorithm (Section~\ref{subsec:DP}) against a brute-force branch-and-bound algorithm that finds the correct edit distance on general pedigrees in time exponential in pedigree
size. We used the modified dynamic programming algorithm for the case where an upper bound on the distance is unknown: we chose a reasonable upper bound, and if the algorithm failed to find a matching at a given step (because the distance between the two pedigrees was too large), it ran the randomized heuristic instead. Our simulations were performed on small, three-generation pedigrees so that the edit distance could be computed using the exponential-time branch-and-bound algorithm.

From the simulations, it appears that the heuristic algorithm provides a reasonable estimate of the edit distance, especially when the two pedigrees being compared are very similar to each other. The DP algorithm provides the correct answer when the two pedigrees are similar, a reasonably close answer when the pedigrees are not very similar, and an answer that matches the heuristic algorithm when the the pedigrees are very different. Of course, these results depend on the parameter $k$ we chose.


\paragraph{The simulation.}
We first drew a leaf-labeled pedigree ${\cal P} = (P, s, X, \ell)$ from a Wright-Fisher simulation where every generation has a fixed number $2m$ of individuals, there is no inter-generational mating, each monogamous couple has a number of offspring drawn from a Poisson distribution with mean $\lambda$, and all leaves are labeled. We then randomly perturbed $\mathcal{P}$ to obtain $\mathcal{P'}$ by having some fraction $x$ of non-founders choose a new parent of one gender uniformly and independently at random. (Results obtained using a perturbation model that preserved monogamy were similar.) Note that $\mathcal{P}$ and $\mathcal{P'}$ are always compatibly leaf-labeled.

\paragraph{Algorithms compared.}
We recorded the following measures of similarity for the pedigrees $\mathcal{P}$ and $\mathcal{P'}$.
\begin{enumerate}
\item Simulated Edit Path Length: $x$
\item Random-Matching Heuristic Estimate: $\hat{D}_{\mathcal{P},\mathcal{P'}} / (|E(P)| + |E(P')|)$, where $\hat{D}_{\mathcal{P},\mathcal{P'}}$ is the output of the random-matching heuristic.
\item Normalized Edit Distance: $D_{\mathcal{P},\mathcal{P'}} / (|E(P)| + |E(P')|)$, where $D_{\mathcal{P},\mathcal{P'}}$ is the output of the branch-and-bound algorithm.
\item DP Estimate: $\tilde{D}_{\mathcal{P},\mathcal{P'}} / (|E(P)| + |E(P')|)$, where $\tilde{D}_{\mathcal{P},\mathcal{P'}}$ is the output of the dynamic programming algorithm, modified for the case where there is no guarantee on distance.
\end{enumerate}

\begin{remark}
Notice that $x$ is often larger than the edit distance because the edit path taken in the simulation was longer than the shortest edit path.
\end{remark}

\begin{remark}
Since our pedigrees were randomly generated and perturbed, in practice we could not ensure the DP algorithm's condition that the pedigrees being compared be sufficiently similar at each generation. Therefore we modified the algorithm to assume a reasonable upper bound $k = 8$ on this distance and give the output of the random heuristic if no matching was found that met this condition.
\end{remark}


\paragraph{Simulation results.}
The three different results we recorded are plotted in Figure~\ref{fig:accuracy} against $x$,
the fraction of pedigree edges changed during simulation. Figure~\ref{fig:accuracy} also shows the difference between the random-matching and true edit distances. Figure~\ref{fig:runningtime} shows the running times for the three algorithms.

We see the random-matching heuristic performs reasonably well, both in terms of accuracy and time. The DP algorithm agrees with the true edit distance when that distance is small and  agrees with the random-matching estimate when the distance is large.  
However, in the intermediate area, we see that the DP produces an answer between the optimal and the heuristic value, 
because there are matchings that satisfy the distance threshold at every generation which are not the optimal matching 
and the optimal matching contains a generation that fails the distance threshold.  
Due to its accuracy, we recommend that the DP algorithm and the randomized heuristic be used together.  If degree of accuracy is not needed, then we recommend using linear-time leaf-labeled isomorphism algorithm.

\begin{figure}
h\begin{center}
\includegraphics[scale=0.35]{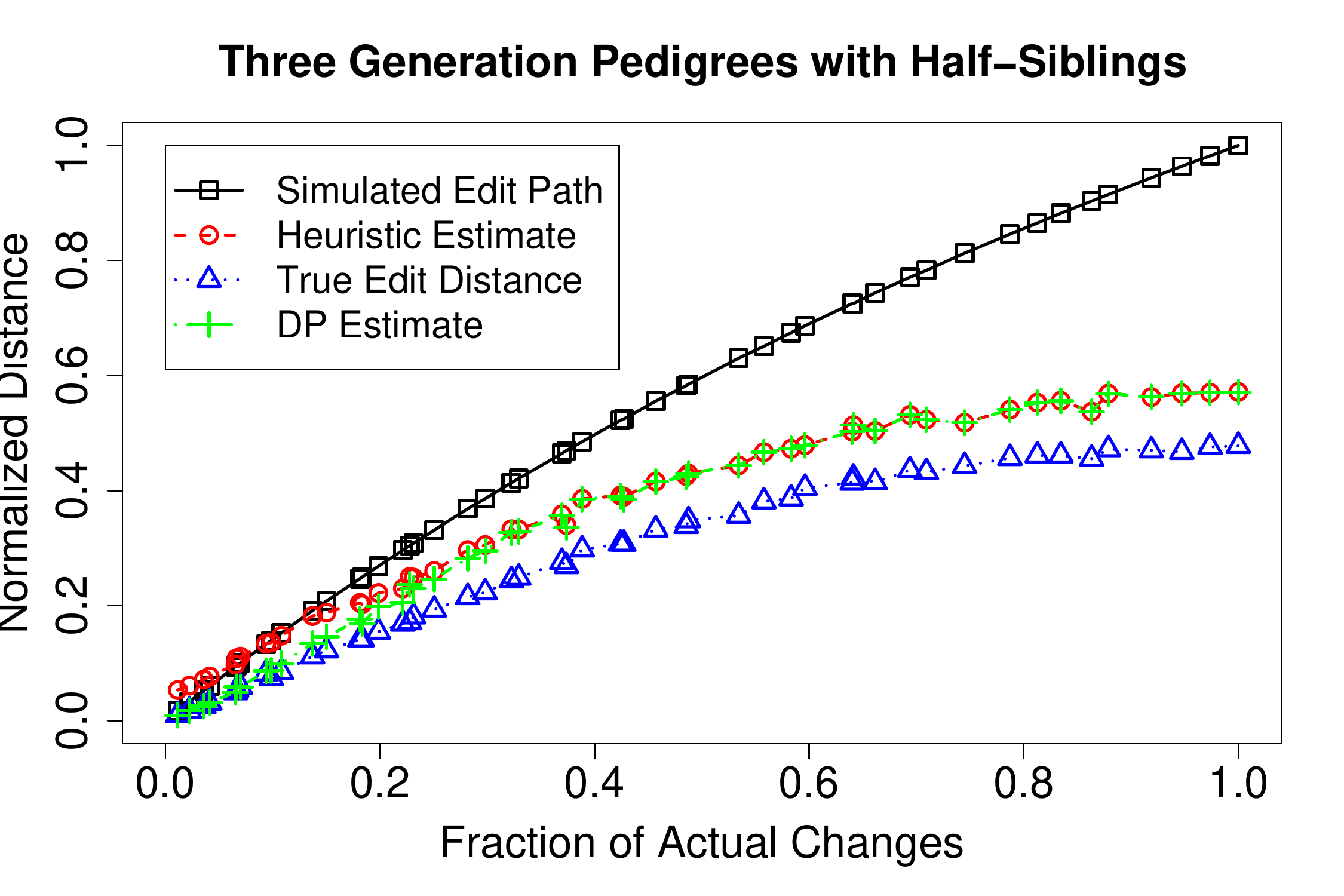}
\includegraphics[scale=0.35]{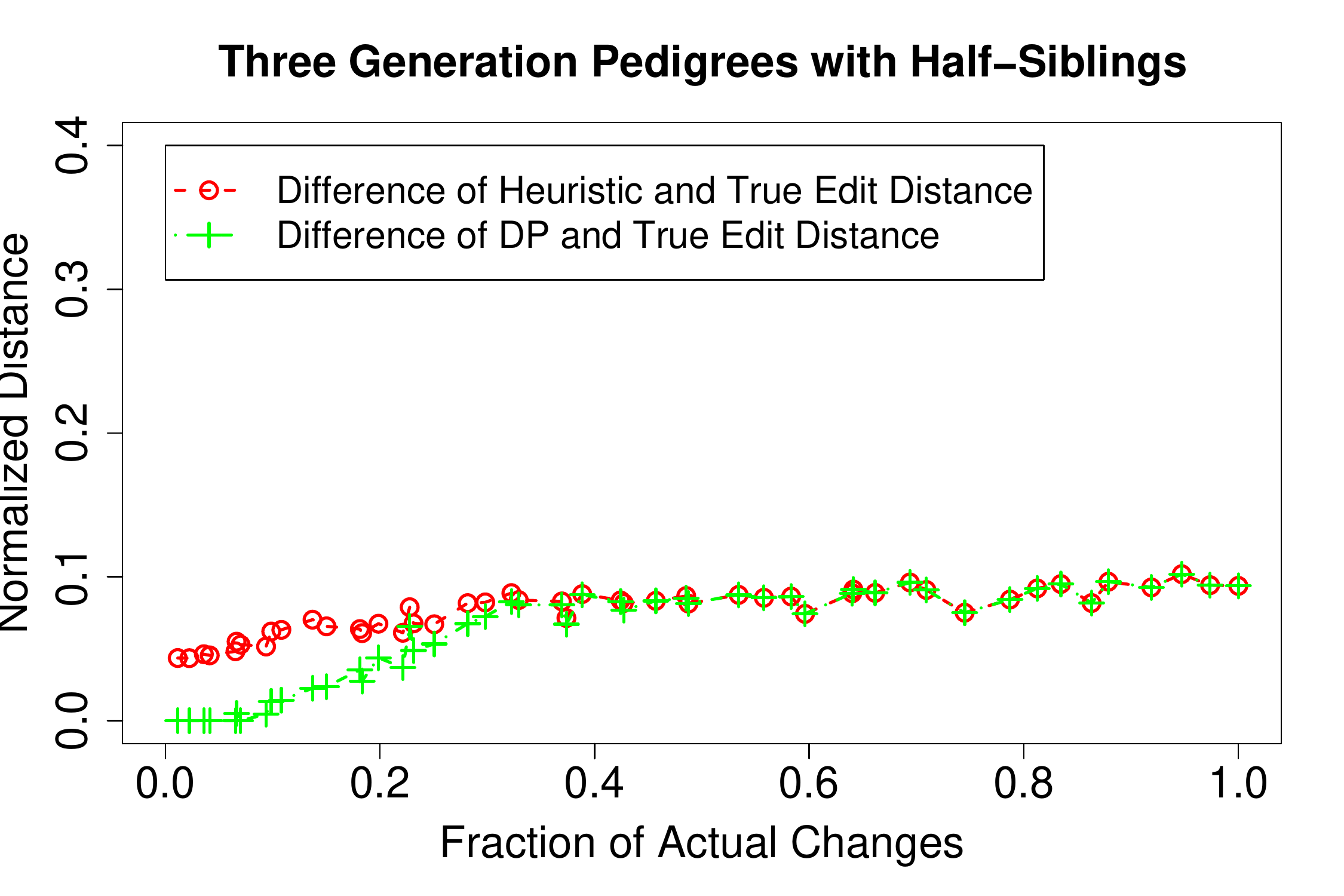}
\caption{{\bf Comparing Different Distances Estimates.} 
With $2m = 14$ and $\lambda=3$, there were 2500 pairs of pedigrees
simulated.  Each point is an average of 50 simulations.  The values of
$n$ and $\lambda$ were chosen such that the branch-and-bound algorithm
would finish computing the true edit distance.  The random matching
heuristic yields an estimated edit distance which is fairly close to
the true edit distance.  The DP algorithm performs nearly perfectly
for small numbers of actual changes, while it returns the solution
found by the random-matching heuristic when it cannot find a solution
for parameter $k=8$.  The left panel shows the accuracies of each
algorithm.  The right panel shows the difference in accuracy between
the true edit distance and each distances returned by the
random-matching heuristic and DP algorithm.  }
\label{fig:accuracy}
\end{center}
\end{figure}

\begin{figure}
\begin{center}
\includegraphics[scale=0.3]{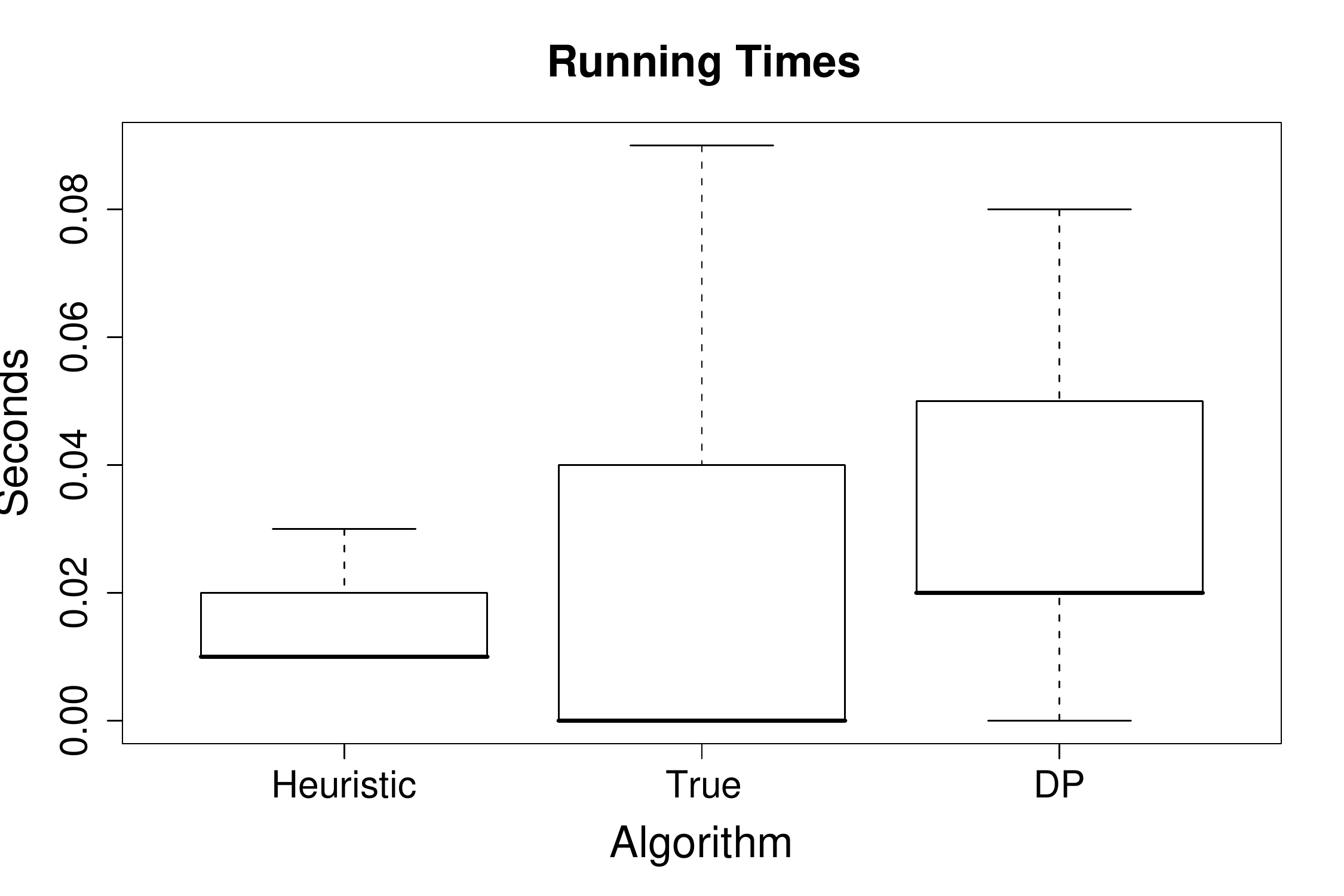}
\caption{{\bf Running Times.} 
These are box plots comparing the running times of the three different
algorithms: heuristic algorithm, branch-and-bound algorithm, and the
DP algorithm.  The heavy line is the median, the rectangle indicates
the first and third quartiles.  In this case the median is coincident
with the first quartile for all three algorithms.  The outliers are
not shown; specifically, there are a number of very long execution
times for the optimal algorithm.  }
\label{fig:runningtime}
\end{center}
\end{figure}


\section{Discussion}
\label{sec:discussion}

In this paper, we introduced two pedigree comparison problems---pedigree isomorphism and pedigree edit distance---and we presented algorithms and hardness results for both. Several interesting open questions remain:

\begin{itemize}

\item {\bf Fractional edit distance.} An alternative definition of
edit distance could be based on fractional matchings: instead of minimizing
over all one-to-one matchings of vertices in the two pedigrees, we could
allow a vertex of $P$ to be matched to multiple vertices in $P'$ with
weights summing to one. Such a distance could be easier to compute,
although the biological interpretation is less clear. It would also be
interesting to explore the relationship between the definition
presented in this paper and this alternate definition.

\item {\bf Pedigrees with inter-generational mating} Another open problem of interest is how the edit distance algorithms work on pedigrees with inter-generational mating.
The simulations used here were based on the Wright-Fisher model and
did not allow any inter-generational mating events.  It may be of
interest to simulate the pedigrees using a birth-death model such as the
Moran model where inter-generational mating is allowed. Such a
simulation would allow the evaluation of the distance heuristics on
non-regular pedigrees.

\item{\bf Pedigree isomorphism without labels} A very interesting open problem is that of pedigree isomorphism without labels.  Since the graph isomorphism problem is reducible to it, it is conceivable that existing algorithms for graph isomorphism might be of use for pedigree graphs.
\end{itemize}

Comparison of pedigrees is an interesting and important problem. Here, we have taken the first steps towards understanding and solving it.


\section*{Acknowledgments.}
Many thanks go to Yun Song for stimulating discussions and to Eran
Halperin for the random pedigree simulation.

\section*{Author Disclosure Statement.}
No competing financial interests exist.

\bibliography{pedigree}
\bibliographystyle{plain}

\end{document}